\pgfplotsset{compat=newest} 
\pgfplotsset{plot coordinates/math parser=false} 
\newlength\figureheight 
\newlength\figurewidth
\newcommand{\comment}[1]{}
\newcommand{\argmax}{\arg\!\max}
\newtheorem{theorem}{Theorem}
\newtheorem{lemma}[theorem]{Lemma}
\newtheorem{proposition}[theorem]{Proposition}
\definecolor{Gray}{gray}{0.9}
\definecolor{LightCyan}{rgb}{0.88,1,1}
\algnewcommand{\IIf}[1]{\State\algorithmicif\ #1\ \algorithmicthen}
\algnewcommand{\IElse}[2]{\State\algorithmicelse\ #2\ }
\algnewcommand{\EndIIf}{\unskip\ \algorithmicend\ \algorithmicif}
\begin{document}
	\setlength{\parskip}{0em}

\title{Hybrid NOMA-TDMA for Multiple Access Channels with Non-Ideal Batteries and Circuit Cost}
\author{Rajshekhar~Vishweshwar~Bhat,~\IEEEmembership{Graduate~Student~Member,~IEEE,}
	Mehul~Motani,~\IEEEmembership{Fellow,~IEEE,}
	and~Teng~Joon~Lim,~\IEEEmembership{Fellow,~IEEE}}
\maketitle
	%\vspace{-1em}
	\begin{abstract}
		We consider a multiple-access channel where the users are powered from batteries having non-negligible internal resistance. When power is drawn from the battery, a variable fraction of the power, which is a function of the power drawn from the battery, is lost across the internal resistance.  Hence, the power delivered to the load is less than the power drawn from the battery.  The users consume a constant power for the  circuit operation during transmission but do not consume any power when not transmitting. In this setting, we obtain the maximum sum-rates and achievable rate regions under various cases. We show that, unlike in the ideal battery case, the TDMA (time-division multiple access) strategy, wherein the users transmit orthogonally in time, may not always achieve the maximum sum-rate when the internal resistance is non-zero. The users may need to adopt  a hybrid NOMA-TDMA  strategy which combines the features of NOMA (non-orthogonal multiple access) and TDMA, 
		wherein a set of users are allocated fixed time windows for orthogonal single-user  and non-orthogonal joint transmissions, respectively. We also numerically show that the maximum achievable rate regions in NOMA and TDMA strategies are contained within the maximum achievable rate region of the hybrid NOMA-TDMA strategy.

	\end{abstract}
	\IEEEpeerreviewmaketitle
	\section{Introduction}
	In battery-powered communication devices, usually, batteries are designed to have high energy density so that a large amount of energy can be stored in a small volume. Quite often, such {high energy-density} batteries exhibit a low power density, i.e., the batteries cannot deliver a large amount of energy in a short period of time with high efficiency \cite{EDvsPD}. This limitation can be abstractly accounted for by modeling the battery as a voltage source with a series internal resistance. When the internal resistance is non-negligible, a variable fraction of the power drawn\footnote{The `power drawn from the battery' or equivalently, the `discharge power' is the rate at which energy is depleted from the battery, internally.} from the battery is lost across the internal resistance thereby reducing the power delivered to the load.

	In this work, we consider a multiple access channel (MAC) with the users having batteries with non-negligible internal resistance. The users are switched on only when they transmit data and they consume a constant power, referred to as the circuit cost, when switched on. Over the remaining period of time, the users go to the sleep state where they consume a negligible amount of power. When the circuit cost is non-zero, to avoid energy wastage in the circuit,  it may be optimal to transmit for a small amount of time, in bursts \cite{MAC-PC,GP}. However, transmitting in bursts entails  \emph{high-rate} discharge of the battery. In practice, when  the  internal resistance is non-zero,  the battery output current collapses if the load attempts to draw too much power from the battery \cite{IR,Krieger}. Hence, it may be inefficient, or in some cases, infeasible to transmit in bursts. 
	This results in a trade-off between the losses in the circuit and the internal resistance. 
	%This results in a trade-off between the energy consumption (loss) for the circuit operation and loss in the battery.  
	In this work, we address this trade-off by jointly optimizing the transmit power, discharge power of the battery  and the duration of transmission.  
	
	The impact of circuit cost on communication rates over a point-to-point channel and a MAC has been studied in \cite{MAC-PC,GP}, when the transmitters are powered from ideal batteries. 
	The impact of circuit cost has also been studied for interference channels \cite{IC-PC,ZIC}. In \cite{BC-PC}, a broadcast channel with circuit cost has been considered. Under the assumption that on-off states of the users are not used for signaling,  the authors in \cite{MAC-PC,GP} show that when the circuit cost is large, bursty transmission achieves the capacity in the point-to-point channel. For a two-user MAC, it has been shown that the TDMA strategy, wherein the users transmit orthogonally in time, achieves the maximum sum-rate. Further, a strategy which we refer to as the hybrid {NOMA-TDMA}, wherein the users transmit simultaneously for a fraction of time, in addition to individual transmission in disjoint intervals, can achieve any rate pair in the maximum achievable rate region. In all the above works, an important takeaway is that when the circuit cost is non-zero, transmitting all the time, as in the \emph{zero-cost} case is no longer optimal \cite{IC-PC}.  
	%Recently, the impact of the circuit cost and non-ideal batteries in energy-harvesting communication systems has been considered in \cite{EH-PC1,EH-PC2,IR} and \cite{IR,IR_NO_CSIT,Inefficient-JSAC}, respectively. 
	
	In this work, we account for the internal resistance of the batteries, in addition to the circuit cost. As in \cite{GP,IC-PC,BC-PC}, our focus is to allocate the degrees of freedom in terms of the transmission duration and the power, under the assumption that on-off states do not carry any information.
	The main contributions of the paper are:
	\begin{itemize}[leftmargin=*]
		\item We show that, unlike in the ideal battery case, the TDMA strategy does not always achieve the maximum sum-rate when the internal resistance is non-zero.
		\item We then obtain maximum sum-rates and maximum achievable rate regions using a hybrid NOMA-TDMA strategy.  
		\item By numerical simulations, we show that the maximum achievable rate regions in NOMA and TDMA strategies are contained within the maximum achievable rate region of the hybrid NOMA-TDMA strategy.
	\end{itemize} 
	The remainder of the paper is organized as follows. The system model is  presented in Section \ref{sec:model}. We first study the single user case in Section \ref{sec:1user}. We then study the two-user case in Section \ref{sec:2users} and generalize the results to an arbitrary number of users in Section \ref{sec:Uusers}. Numerical results are presented in Section \ref{sec:numerical results}, followed by concluding remarks in Section \ref{sec:reflections}.
	
	\section{System Model and Assumptions}\label{sec:model}
	In this work, we consider a MAC with $U$ users indexed by $1,\ldots,U$, transmitting data to an access point. We assume that each of the users have an infinite backlog of data. The circuit cost of user $u\in \mathcal{U}\triangleq \{1,\ldots,U\}$ is $\gamma^{(u)}$ \si{\watt}, i.e., the user $u$ consumes $\gamma^{(u)}$ \si{\watt}  for circuit operation during transmission but does not consume any power when not transmitting. 
	We assume the user $u\in \mathcal{U}$ has $B^{(u)}$ \si{\joule}  in its battery. Based on practical batteries, we consider the following 
	non-linear battery discharge model. 
	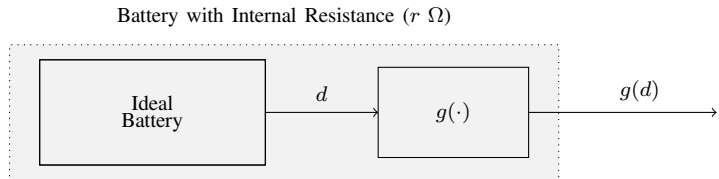
\begin{figure}[t]
		\centering
		\begin{tikzpicture}[scale=2]
		\draw [dotted,fill=black!5](3.55,-0.45) rectangle (7.2,0.45) node [above, midway,yshift=1cm] {\scriptsize Battery with Internal Resistance ($r$ \si{\ohm})};   
		\draw (3.75,-0.35) rectangle (5.25,0.35) node [pos=0.5,yshift=0.13cm] {\scriptsize Ideal }; 
		\draw (3.75,-0.35) rectangle (5.25,0.35) node [pos=0.5,yshift=-0.13cm] {\scriptsize Battery}; 
		\draw [->] (5.25,0) -- (6,0) node [midway,above] {\scriptsize $d$};
		\draw (6,-0.3) rectangle (7,0.3) node [pos=0.5] {\scriptsize $g(\cdot)$};
		\draw [->] (7,0) -- (8.25,0) node [midway,above,xshift=0.23cm] {\scriptsize $g(d)$}; 
		\end{tikzpicture}
		\caption{The battery with internal resistance is depicted as an ideal battery with an additional block that models the effect of the internal resistance. When the battery is discharged at $d$ \si{\watt},  the rate at which energy is available at the load is $g(d)$ \si{\watt}.}
		\label{fig:flow-diagram}
	\end{figure}
	%\subsection{Battery Discharge Model}
	In practice, when the transmitter attempts to discharge a battery, a fraction of the  discharging power is lost in the form of heat dissipated by the internal resistance of the battery. To describe this impact of the internal resistance, we present a block diagram in Fig. \ref{fig:flow-diagram}, where the battery with internal resistance is depicted as an ideal battery with an additional block that models the effect of the internal resistance. When the battery is discharged at $d$ \si{\watt},  the rate at which energy is available at the load (transmitter) is $g(d)$ \si{\watt} and the remaining  $d-g(d)$ \si{\watt} is lost in the internal resistance.
	Based on  \cite{IR}, we assume that  $g(d)$ is a concave function of $d$, and $g(d)\leq d$, for a fixed internal resistance, $r$.
	We represent $g(\cdot)$ in user $u$ as $g^{(u)}(\cdot)$ in the rest of the work. 
	
	%In the rest of the work,  we represent $d$ and $g(\cdot)$ for user $u$ as $d^{(u)}$ and $g^{(u)}(\cdot)$,  respectively.
	%Our model is more general in the sense that it models any concave discharging function. 
	
	%\subsection{Maximum Achievable Rates}
	We assume the transmission takes place over an additive white Gaussian noise channel of unit power spectral density. 
	When a user transmits with a constant power $P$ \si{\watt}, the maximum achievable rate is assumed to be $\log(1+P)$ nats/second. We also assume  the perfect synchronization among the users, as in \cite{GP,IC-PC,ZIC,BC-PC}. 
	%Further, in the MAC with $U$ users, when user $u$ transmits with a constant power $P^{(u)}$ \si{\watt}, the maximum achievable rate region is given by \cite{OnlineOzgur}, 
	%\begin{align}\label{eq:instThroughputRegion}
	%	\mathcal{C}\left(P^{(u)},u\in\mathcal{U}\right)=&\Big\{C^{(u)}:\sum_{u\in \mathcal{S}}C^{(u)}\leq \log\left(1+\sum_{u\in \mathcal{S}}P^{(u)}\right),\nonumber&&\\
	%	&\quad \forall\; \mathcal{S}\subseteq \mathcal{U}\Big\}&&
	%\end{align}	
	%Each point in $\mathcal{C}\left(P^{(u)},u\in\mathcal{U}\right)$ is expressed in nats/second. 

	\section{Single-User Case }\label{sec:1user}
	We now consider the single user case. We obtain the maximum achievable sum-rate and highlight the impact of the initial energy stored in the battery on the optimal transmit duration. We drop the user index for simplicity. Since the circuit cost can be non-zero, it may not be optimal to transmit over the entire $T$ seconds \cite{GP}. Hence, we assume the transmission takes place over $\tau$ $(\tau\leq T)$ seconds. 
	Let the battery discharge power be $d$ \si{\watt}. Then, the power available  at the transmitter is $g(d)$ \si{\watt}. Let $P$ \si{\watt} be the transmit power. Then, recalling that the circuit cost is $\gamma$ \si{\watt}, the power consumed at the transmitter is $(P+\gamma)$ \si{\watt}. This power must be less than or equal to the available power, $g(d)$ \si{\watt}, i.e.,
	$P+\gamma\leq g(d)$ must be satisfied. Further, the total energy drawn from the battery, $\tau d$, must be less than $B$, i.e.,  $\tau d\leq B$ must be satisfied. 
	To maximize the total number of bits transmitted over $[0,T]$, we thus need to solve the following problem. 
	\begin{subequations}\label{eq:p2p-1}
		\begin{align}
			\mathrm{(P1)}:\underset{\substack{P,d,\tau}}{\text{maximize}} &\;\;\tau\log(1+P)\;\;\;&&\\
			\text{subject to} &\;\; P\leq g(d)-\gamma\label{eq:p2p-1-c1} &&\\
			& \;\;\tau d\leq B\label{eq:p2p-1-c2} && \\
			%	& \;\;0\leq d\leq D_0 \label{eq:p2p-1-c3} &&\\
			& \;\; 0\leq \tau \leq T,\; P \geq 0\label{eq:p2p-1-c4}
		\end{align}
	\end{subequations}
	Clearly, the objective function and \eqref{eq:p2p-1-c2} are non-convex. Based on certain observations, we now propose an equivalent problem in the following lemma. 
	\begin{lemma}\label{lemma:p2p0}
		Let $D_{0}\triangleq \argmax_{d} g(d)$. Consider the following convex optimization problem. 
		\begin{align}\label{eq:p2p-3}
			&\mathrm{(P2)}:\underset{\substack{B/D_0\leq \tau \leq T}}{\text{maximize}} \;\tau\log\left(1+g\left( \frac{B}{\tau}\right)-\gamma\right)&&
		\end{align}
		Now, $\mathrm{(P2)}$ in \eqref{eq:p2p-3} is equivalent to $\mathrm{(P1)}$ in \eqref{eq:p2p-1}. 
	\end{lemma}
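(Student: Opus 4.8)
The plan is to eliminate the variables $P$ and $d$ from $\mathrm{(P1)}$ one at a time, arriving at a one-dimensional problem in $\tau$, and then to check that this reduced problem is exactly $\mathrm{(P2)}$ and that it is convex. Throughout I would use the stated properties of $g$: concavity, $g(d)\le d$, and that $D_0=\argmax_d g(d)$ is well defined and finite.

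First I would observe that, since $\tau\log(1+P)$ is strictly increasing in $P$ and $P$ enters $\mathrm{(P1)}$ only through the upper bound \eqref{eq:p2p-1-c1} and the sign constraint, at any optimum \eqref{eq:p2p-1-c1} is active, so $P^\star=g(d^\star)-\gamma$ (I would treat the degenerate case $g(D_0)\le\gamma$, in which no feasible $P>0$ exists, separately). Substituting reduces $\mathrm{(P1)}$ to $\max_{d,\tau}\ \tau\log(1+g(d)-\gamma)$ subject to $\tau d\le B$, $0\le\tau\le T$, $g(d)\ge\gamma$. Next I would fix $\tau\in(0,T]$ and optimize over $d$ alone: the objective increases in $g(d)$, and $\tau d\le B$ confines $d$ to $[0,B/\tau]$. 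Because $g$ is concave with maximum at $D_0$, it is non-decreasing on $[0,D_0]$ and non-increasing afterwards, so the maximizer of $g$ over $[0,B/\tau]$ is $d^\star(\tau)=\min\{D_0,\,B/\tau\}$. This splits the remaining problem at $\tau=B/D_0$: for $\tau\le B/D_0$ we have $B/\tau\ge D_0$, hence $d^\star=D_0$ and the value $\tau\log(1+g(D_0)-\gamma)$ is linear and increasing in $\tau$, maximized at $\tau=B/D_0$; for $\tau\ge B/D_0$ we have $d^\star=B/\tau$ and the value is $\tau\log(1+g(B/\tau)-\gamma)$, which is precisely the objective of $\mathrm{(P2)}$. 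Since at the junction $\tau=B/D_0$ one has $g(B/\tau)=g(D_0)$, the best value over the first regime equals the objective of $\mathrm{(P2)}$ at its left endpoint, so the optimal value of $\mathrm{(P1)}$ equals $\max_{B/D_0\le\tau\le T}\tau\log(1+g(B/\tau)-\gamma)$, and an optimal $(\tau^\star,d^\star,P^\star)$ for $\mathrm{(P1)}$ is recovered as $d^\star=B/\tau^\star$, $P^\star=g(B/\tau^\star)-\gamma$. (When $B/D_0>T$ the feasible set of $\mathrm{(P2)}$ is empty; then $\tau\le T<B/D_0$ always and the optimum is $\tau^\star=T$, $d^\star=D_0$, which I would state as a separate clause.)

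Finally I would verify convexity of $\mathrm{(P2)}$. Let $f(d)\triangleq\log(1+g(d)-\gamma)$; since $1+g(d)-\gamma$ is concave in $d$ and positive on the relevant range, and $\log$ is concave and increasing, $f$ is concave. The map $\tau\mapsto\tau f(B/\tau)$ is the perspective of $f$ with the numerator held fixed at $B$, hence concave in $\tau$, and $[B/D_0,T]$ is an interval, so $\mathrm{(P2)}$ maximizes a concave function over a convex set.

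I expect the main obstacle to be the case split in the second step: arguing rigorously that forcing $d=B/\tau$ (fully draining the battery) together with $\tau\ge B/D_0$ sacrifices no optimality, and disposing of the boundary and degenerate situations — $\gamma$ large enough that $g(D_0)\le\gamma$, $B/D_0$ exceeding $T$, or $g(B/\tau)$ falling below $\gamma$ for large $\tau$ — cleanly enough that $\mathrm{(P2)}$ as written is genuinely equivalent to $\mathrm{(P1)}$.
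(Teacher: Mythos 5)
Your proposal is correct and follows essentially the same route as the paper's proof: make the power constraint tight, show the optimal discharge is $d=\min(B/\tau,D_0)$, observe the resulting objective is linear and increasing on $[0,B/D_0]$ so that range can be collapsed to its right endpoint, leaving exactly $\mathrm{(P2)}$. Your treatment is somewhat more careful than the paper's on the degenerate cases and on justifying convexity via the perspective of a concave function, but the underlying argument is the same.
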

	\begin{proof}
		See Appendix A.  
	\end{proof}
	%
	%
	% However, we note that it is optimal to utilize all the energy stored in the battery,  subject to \eqref{eq:p2p-1-c3}. Hence, both \eqref{eq:p2p-1-c1} and \eqref{eq:p2p-1-c2} must be satisfied with equality, i.e., $d=\min(B/\tau,D_0)$ and $P=\left[g\left(\min({B/\tau,D_0})\right)-\gamma\right]^+$, where $[x]^+=\max(x,0)$. 
	%Note that when $g(D_0)<\gamma $, the solution is infeasible as the battery cannot deliver sufficient power even to run the circuitry.  Now,  $\mathrm{(P1)}$ in \eqref{eq:p2p-1} can be reformulated as:
	%\begin{align}\label{eq:p2p-2}
	%	&\mathrm{(P2)}:\underset{\substack{0\leq \tau \leq T}}{\text{maximize}} \;\tau\log\left(1+g\left(\min\left(\frac{B}{\tau},D_0\right)\right)-\gamma\right)&&
	%\end{align}
	%Note that the objective function of $\mathrm{(P2)}$ in \eqref{eq:p2p-2} is a monotonically increasing continuous function of $\tau$ over $\tau \in [0,T]$ -- it is a linearly increasing function over $\tau\in [0,B/D_0]$, as $\min(B/\tau,D_0)=D_0$, and a strictly concave increasing  function over $\tau\in (B/D_0,T)$.  Clearly, for  $\tau\in [0,B/D_0]$, the objective function attains its maximum at $\tau=B/D_0$. 
	%Hence, we can obtain the optimal solution by solving the following convex optimization problem.  
	%
	%
	%\begin{align}\label{eq:p2p-3}
	%	&\mathrm{(P3)}:\underset{\substack{B/D_0\leq \tau \leq T}}{\text{maximize}} \;\tau\log\left(1+g\left( \frac{B}{\tau}\right)-\gamma\right)&&
	%\end{align}
	%$\mathrm{(P3)}$ in \eqref{eq:p2p-3} is convex and it can be easily solved by standard numerical techniques.
	From the above lemma, we note that the maximum power that the battery can deliver is $D_0$ \si{\watt} and when $g(D_0)<\gamma$, the solution is infeasible as the battery cannot deliver sufficient power even to run the circuitry. Further, when feasible, $\mathrm{(P2)}$ can be solved by bisection search efficiently.  
	We now study the impact of variation of $B$ on the solution of $\mathrm{(P2)}$ using the Karush-Kuhn-Tucker (KKT) conditions. 
	\begin{proposition}\label{lemma:p2p}
In the optimal solution to 	$\mathrm{(P2)}$, the optimal transmit duration, $\tau^*$ is a linearly increasing function of  $B$ in the range $(B/D_0, T)$. 
\end{proposition}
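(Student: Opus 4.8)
The plan is to analyze $\mathrm{(P2)}$ through its KKT conditions and to show that, whenever the optimal transmit duration is strictly interior to $[B/D_0,T]$, the stationarity condition fixes the optimal \emph{discharge power} $d^{*}\triangleq B/\tau^{*}$ through an equation in which $B$ does not appear; linearity of $\tau^{*}$ in $B$ then follows at once.

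First I would note that $\mathrm{(P2)}$ is a concave maximization in the single variable $\tau$ over the interval $[B/D_0,T]$: with $\phi(d)\triangleq\log(1+g(d)-\gamma)$, the objective $\tau\mapsto\tau\,\phi(B/\tau)$ is the perspective of the concave function $\phi$ evaluated along the line $x=B$, hence concave in $\tau$ (concavity of $\phi$ itself follows since $g$ is concave and $\log(1+(\cdot)-\gamma)$ is concave and nondecreasing). Therefore the KKT conditions are necessary and sufficient, and the maximizer is attained either at an endpoint of $[B/D_0,T]$ or at the unique interior stationary point. Attaching multipliers $\lambda_1\ge 0$ to $B/D_0-\tau\le 0$ and $\lambda_2\ge 0$ to $\tau-T\le 0$, in the regime $B/D_0<\tau^{*}<T$ complementary slackness forces $\lambda_1=\lambda_2=0$, so stationarity collapses to $\left.\frac{d}{d\tau}\bigl[\tau\,\phi(B/\tau)\bigr]\right|_{\tau=\tau^{*}}=0$.

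Next I would carry out this single differentiation. Substituting $d=B/\tau$,
\[
\frac{d}{d\tau}\bigl[\tau\,\phi(B/\tau)\bigr]=\phi(d)-d\,\phi'(d)=\log\bigl(1+g(d)-\gamma\bigr)-\frac{d\,g'(d)}{1+g(d)-\gamma},
\]
so the interior optimality condition reads $\log\bigl(1+g(d^{*})-\gamma\bigr)=\dfrac{d^{*}g'(d^{*})}{1+g(d^{*})-\gamma}$, an equation in the lone unknown $d^{*}=B/\tau^{*}$ that is free of $B$. This is precisely the first-order condition for maximizing $\psi(d)\triangleq\phi(d)/d$; since the derivative of $\phi(d)-d\,\phi'(d)$ with respect to $d$ equals $-d\,\phi''(d)\ge 0$, the map $\psi$ is unimodal, and invoking $g'(D_0)=0$ together with the feasibility hypothesis $g(D_0)>\gamma$ noted after Lemma~\ref{lemma:p2p0} places its maximizer $d^{*}$ strictly inside $(0,D_0)$ and makes it unique. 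Hence $d^{*}$ is a fixed constant, independent of $B$, and $\tau^{*}=B/d^{*}$ is a linear function of $B$ with positive slope $1/d^{*}$; it is thus strictly increasing for every $B$ for which it remains strictly between $B/D_0$ and $T$ (equivalently, for every $B$ small enough that $B/T<d^{*}$).

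The only genuine computation is the derivative displayed above. The point that needs care is the justification that an interior stationary point exists and is unique — equivalently, that $\psi=\phi/d$ has a single interior maximizer — which I would obtain from concavity of $\phi$ (yielding unimodality of $\psi$) and from $g(D_0)>\gamma$ with $g'(D_0)=0$ (forcing the maximizer strictly into $(0,D_0)$ rather than onto the boundary).
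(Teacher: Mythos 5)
Your proof is correct and follows essentially the same route as the paper: write the KKT stationarity condition for the interior regime $B/D_0<\tau^*<T$, observe that it is an equation in the single quantity $d^*=B/\tau^*$ with no separate dependence on $B$, and conclude that $d^*$ is constant so $\tau^*=B/d^*$ grows linearly in $B$. Your added monotonicity argument for $\phi(d)-d\,\phi'(d)$ (giving uniqueness of the interior root) is a slightly more explicit justification than the paper's appeal to strict concavity of the objective in $\tau$, but it is the same underlying idea.
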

	\begin{proof}
		See Appendix B. 
	\end{proof}
From the above result, we note that as $B$ increases, the optimal discharge power, $d^*=B/\tau^*$ remains constant when $B/D_0<\tau^*<T$.
We also note the above result is similar to the optimal result when the battery is ideal, with $g(d)=d$,  where the optimal transmit duration is a  linearly increasing function of $B$ over $(0,T)$  \cite{GP}.  Further, for some $B'$,  if the optimal transmit duration, $\tau^*=T$, then  for any $B>B'$, we have,  $\tau^*=T$, due to the constraint that $\tau\leq T$.

	\section{Two-User Multiple Access Channel}\label{sec:2users}
	We now consider the two-user MAC. We first propose the optimal frame structure and obtain the maximum achievable rate region. We then obtain and compare maximum sum-rates under various strategies,  including the optimal strategy and  strategies which are optimal under certain special assumptions. 
	
	%A  frame structure is said to be  \emph{optimal} if the feasible power and transmission lengths (in seconds) in any other frame structure is a subset of the frame structure. 
	
	\subsubsection{Optimal Frame Structure}
	As in the single-user case, since the circuit cost can be non-zero, it may not be optimal for both the users to transmit over the entire  time duration of $T$ seconds. In the two-user MAC, at a given instant of time, none, one or both of the users may be transmitting. To allow all the above degrees of freedom, we adopt the communication frame structure shown in  Fig. \ref{fig:MAC-phi}. The frame is divided into four phases of lengths $\tau_1,\tau_2,\tau_3$ and $\tau_4$ in which none of the users, only user $1$, only user  $2$ and both the users transmit, respectively. 
	We represent the transmit and discharge powers in phase $i\in \{1,\ldots,4\}$ in user $u\in \{1,2\}$ by   $P^{(u)}_i$ and $d^{(u)}_i$, respectively. %We represent the number of phases by $N$.
	Clearly, for any given length of the phases, the order in which the phases are transmitted does not affect the feasible range of power allocations, as the entire energy is available at the start of the transmission. Note that in phase $4$, the users superimpose their codewords and transmit, i.e., they adopt the NOMA strategy. The information is decoded at the receiver by successive interference cancellation. Also note that the phases are time multiplexed. Hence, we refer to a strategy that uses the frame structure  in Fig. \ref{fig:MAC-phi} as the hybrid NOMA-TDMA strategy. 
	We also note that such a frame structure has  been considered in \cite{GP} for the two-user MAC with ideal batteries to obtain  the maximum achievable rate region.

	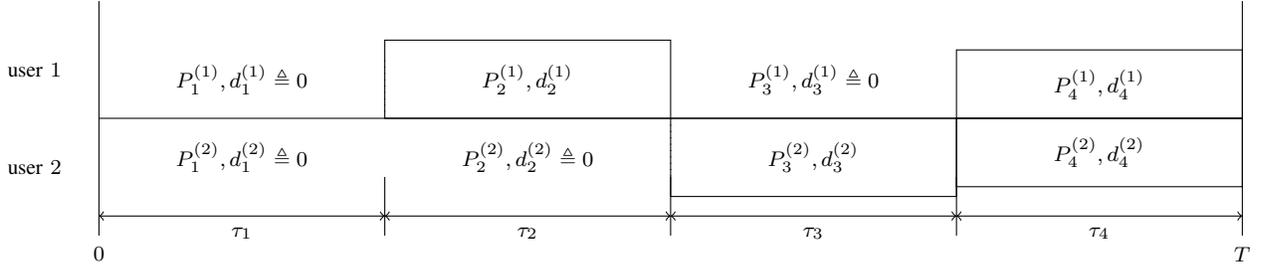
\begin{figure}[t]
		\centering
		\begin{tikzpicture} [scale=2,yscale=1.3,xscale=.95]
		\node at (-0.45,0.25) {\scriptsize user $1$};
		\node at (-0.45,-0.25) {\scriptsize user $2$};
		\draw (0,0)  -- (8,0);
		\draw (0,0.6) -- (0,-0.6) node [below] {\scriptsize $0$};
		\draw (8,0.6) -- (8,-0.6) node [below] {\scriptsize $T$};
		
		\draw [opacity=0](0,0) rectangle (2,-0.4) node [pos=0.5,opacity=1] {\scriptsize $P^{(2)}_1,d^{(2)}_1\triangleq 0$};
		\draw [opacity=0](0,0) rectangle (2,0.4) node [pos=0.5,opacity=1] {\scriptsize $P^{(1)}_1, d^{(1)}_1\triangleq 0$};
		
		\draw (2,0) rectangle (4,0.4) node [pos=0.5] {\scriptsize $P^{(1)}_2,d^{(1)}_2$};
		\draw [opacity=0](2,0) rectangle (4,-0.4) node [pos=0.5,opacity=1] {\scriptsize $P^{(2)}_2, d^{(2)}_2\triangleq 0$};
		
		\draw (4,-0.4) rectangle (6,0) node [pos=0.5] {\scriptsize $P^{(2)}_3,d^{(2)}_3$}; 
		\draw [opacity=0](4,0) rectangle (6,0.4) node [pos=0.5,opacity=1] {\scriptsize $P^{(1)}_3, d^{(1)}_3 \triangleq 0$};	
		
		\draw (6,0) rectangle (8,0.35) node [pos=0.5] {\scriptsize $P_4^{(1)},d_4^{(1)}$};   
		\draw (6,-0.35) rectangle (8,0) node [pos=0.5] {\scriptsize $P_4^{(2)},d_4^{(2)}$};

		\draw [dotted](2,0) -- (2,0.3);
		\draw [dotted](4,0) -- (4,-0.3); 
		\draw (2,-0.3) -- (2,-0.6);       
		\draw (4,-0.3) -- (4,-0.6);
		\draw (6,-0.3) -- (6,-0.6);   
		\draw [<->](0,-0.5) -- (2,-0.5) node [below, midway] {\scriptsize $\tau_1$};
		\draw [<->](2,-0.5) -- (4,-0.5) node [below, midway] {\scriptsize $\tau_2$};
		\draw [<->](4,-0.5) -- (6,-0.5) node [below, midway] {\scriptsize $\tau_3$};   
		\draw [<->](6,-0.5) -- (8,-0.5) node [below, midway] {\scriptsize $\tau_{4}$};       
		\end{tikzpicture}
		\caption{\scriptsize Transmission frame structure adopted in the work for the two-user  case. }
		\label{fig:MAC-phi} 
	\end{figure}
	\subsubsection{Maximum Achievable Rate Region}
	Let $\gamma_i^{(u)}\in\{\gamma^{(u)},0\}$ be the constant power, required to run the circuitry,  in phase $i$, where $\gamma_i^{(u)}=\gamma^{(u)}$ if user $u$ transmits in phase $i$ and  $\gamma_i^{(u)}=0$ otherwise. Since none of the users transmit in phase $1$ and user $2$ (user $1$) does not transmit in phase $2$ (phase $3$), we apply the following constraints.
	\begin{align}
		&\gamma^{(1)}_1, \gamma^{(2)}_1, \gamma^{(2)}_2, \gamma^{(1)}_3, d^{(1)}_1, d^{(2)}_1 , d^{(2)}_2, d^{(1)}_3\triangleq 0\label{eq:mac_c1}\\
		&P^{(1)}_1, P^{(2)}_1, P^{(2)}_2, P^{(1)}_3  \triangleq 0	\label{eq:mac_c1-1}
	\end{align}  
	Further, as in \eqref{eq:p2p-1-c1}-\eqref{eq:p2p-1-c4},  the  following constraints must be satisfied. 
	\begin{align}
		&P_i^{(u)} \leq g^{(u)}\left( d^{(u)}_i\right)-\gamma^{(u)}_i, \;\; P_i^{(u)}\geq 0,\; \sum_{i=1}^{4}\tau_i \leq T \label{eq:mac_c2}&& \\
		&\sum_{i=1}^{4}\tau_id^{(u)}_i\leq B^{(u)},\; \tau_i\geq 0,\;0\leq d^{(u)}_i\leq D_{0}^{(u)}\label{eq:mac_c5}&&
	\end{align}
	for all $i= 1,\ldots, 4$ and $u=1,2$, where $D_{0}^{(u)}\triangleq \argmax_{d} g^{(u)}(d)$. We apply  $d^{(u)}_i\leq D_{0}^{(u)}$ because all effective discharge powers $g^{(u)}(d^{(u)}_i)$ are obtainable using this range of $d^{(u)}_i$. 
	The rest of the constraints should be self-explanatory. 
	%The constraint in \eqref{eq:mac_c2} is because the power drawn from the battery, $d_k^{(u)}$ is subjected discharging losses and the available power is used for the circuit operation and the transmission, \eqref{eq:mac_c3} is due to the total energy constraint. 
	Now, in phase $i=2,3$, the maximum achievable rate of user $u$ is $\tau_i\log(1+P_i^{(u)})$. In phase $4$, the maximum achievable rate of user $u$ is $\tau_4\log(1+P_4^{(u)})$ and the maximum achievable sum-rate is $\tau_4\log(1+P_4^{(1)}+P_4^{(2)})$ \cite{Cover}.  Hence, the maximum achievable rate region is given by the convex hull of the closure of all $(R^{(1)}, R^{(2)})$ satisfying
	\begin{align}
		&R^{(u)}\leq \tau_i\log(1+P_i^{(u)})+\tau_{4}\log(1+P_4^{(u)})&&\label{eq:cap_region1}\\
		&R^{(1)}+R^{(2)}\leq \tau_2\log(1+P_2^{(1)})+\tau_3\log(1+P_3^{(2)})+\tau_4\log(1+P_4^{(1)}+P_4^{(2)})&& \label{eq:cap_region2}
	\end{align}
	for $i=2,3, \;u=1,2$ subject to \eqref{eq:mac_c1}-\eqref{eq:mac_c5}. 
	%From  Lemma 2 in \cite{JCNjing},
	\subsubsection{Maximum Sum-Rates}
	We now obtain and compare the maximum sum-rates under various strategies.%  including the optimal strategy and those which are optimal under certain special assumptions. 

	\paragraph{ NOMA }
	In this strategy, both the users transmit for the entire time duration, i.e., $\tau_1, \tau_2, \tau_3=0,\tau_4=T$,  with a constant power. Since it is optimal to exhaust the battery subject to the maximum discharge power constraint, $d^{(u)}_i\leq D_{0}^{(u)}$, we have,  $d_4^{(u)*} =\min(B^{(u)}/T,D_{0}^{(u)})$ and $P_4^{(u)*}=g^{(u)}(d_4^{(u)*})-\gamma^{(u)}$ for $u=1,2$.
	The maximum sum-rate is given by
	\begin{align}\label{eq:cap_regionS12}
		R_{\rm NOMA}\triangleq T\log\left(1+\sum_{u=1}^{2}\left(g^{(u)}\left(d_4^{(u*)}\right)-\gamma^{(u)}\right)\right)
	\end{align}
	
	% Hence, the achievable rate region is given by the rate pairs $(R^{(1)},R^{(2)})$ satisfying the following constraints. 
	%\begin{align}
	%&R^{(u)}\leq T\log\left(1+g^{(u)}\left(\min\left(\frac{B^{(u)}}{T},D_{0}^{(u)}\right)\right)-\gamma^{(u)}\right),\quad u=1,2\label{eq:cap_regionS11}\\
	%&R^{(1)}+R^{(2)}\leq T\log\left(1+g^{(1)}\left(\min\left(\frac{B^{(1)}}{T},D^{(1)}_0\right)\right)+ g^{(2)}\left(\min\left(\frac{B^{(2)}}{T},D^{(2)}_0\right)\right)-\gamma^{(1)}-\gamma^{(2)}\right)\nonumber\\
	%&\triangleq R_{\mathrm{NOMA}} \label{eq:cap_regionS12}
	%\end{align}
	%In this strategy, the energy loss in the circuit is high as the users transmit for the entire duration. 
	%Note that the strategy is optimal when $\gamma^{(u)}=0$ for $u=1,2$. The maximum sum-rate in this strategy is given by \eqref{eq:cap_regionS12}. 

	\paragraph{TDMA}
	In this strategy, the users never transmit simultaneously, i.e., $\tau_1,\tau_2,\tau_3\geq 0,\tau_4=0$. The user $1$ transmits for $\tau_2$ seconds, while the user $2$ transmits for  $\tau_3$ seconds, subject to $\tau_2+\tau_3\leq T$.   Hence, there is no interference between the signals and the decoding is precisely as in the single-user case.  The battery at user $u$ is discharged at $\min(B^{(u)}/\tau_i,D_{0}^{(u)})$ \si{\watt}, as it is optimal to consume the entire energy available. 
	%The maximum achievable rate region in this strategy is given by  the convex hull of  union of all the rate pairs $(R^{(1)},R^{(2)})$ satisfying the following constraints. 
	%\begin{align}
	%&R^{(1)} \leq \tau_2\log\left(1+g^{(1)}\left(\min\left({B^{(1)}}/{\tau_2},D^{(1)}_0\right)\right)-\gamma^{(1)}\right)&&\label{eq:MAC-TDM1}\\
	%&R^{(2)}\leq \tau_3\log\left(1+g^{(2)}\left(\min\left({B^{(2)}}/{\tau_3},D^{(2)}_0\right)\right)-\gamma^{(2)}\right)&& \label{eq:MAC-TDM2} \\
	%&\tau_2+\tau_3\leq T 
	%\end{align}
	%We note that the TDMA strategy is optimal when the battery discharge function is ideal, i.e., when $g^{(u)}(d)=d$ \cite{MAC-PC}. 
	Clearly, the maximum sum-rate in this case is given by
	\begin{subequations}\label{eq:mac-II}
		\begin{align}\label{eq:MAC-ZI}
			\mathrm{(P3)}:R_{\mathrm{TDMA}}\triangleq\underset{\substack{ \tau_2,\tau_3}}{\text{maximize}}& \;R^{(1)}+R^{(2)}&&\\
			\text{subject to} & \; \tau_2+\tau_3\leq T,\; \tau_i\geq 0&&
		\end{align}
	\end{subequations}
	for $i=2,3$, 
	where $R^{(1)}+R^{(2)}$ is obtained from \eqref{eq:cap_region2} with equality.

	%We note that $R^{(1)}$ and $R^{(2)}$ are concave functions and hence,  the optimization problem in \eqref{eq:mac-II} can be solved  easily  as in the single user case.  

	\paragraph{Hybrid NOMA-TDMA}
	This is the most general case with $\tau_1,\tau_2,\tau_3,\tau_4 \geq  0$.  To maximize the sum-rate, we need to solve the following optimization problem. 
	\begin{subequations}\label{eq:MAC}
		\begin{align}
			\mathrm{(P4)}:\underset{\substack{P_i^{(u)},d^{(u)}_i, \tau_i \\ \{i\}_1^4,u\in \{1,2\}}}{\text{maximize}} &\;\;\sum_{i=1}^{4}\tau_i\log\left(1+P^{(1)}_i+P^{(2)}_i\right)&&\\
			\text{subject to} & \;\;\;\;\eqref{eq:mac_c1}-\eqref{eq:mac_c5}&&
		\end{align}
	\end{subequations}
	where the objective function is obtained from \eqref{eq:cap_region2}. 
	Note that $\mathrm{(P4)}$ in \eqref{eq:MAC} is non-convex. 
	We now transform the problem to a convex problem  by a change of variables. Define $E_i^{(u)}\triangleq \tau_iP_i^{(u)}$ and $e_i^{(u)}\triangleq \tau_id^{(u)}_i$  for all  $u\in \{1,2\},i\in \{1,\ldots,4\}$.  Now, $\mathrm{(P4)}$ in \eqref{eq:MAC} can be transformed to, 
	\begin{subequations}\label{eq:p_MAC1}
		\begin{align}
			\mathrm{(P5)}:\underset{\substack{E_i^{(u)},e_i^{(u)},\tau_i\\  \{i\}_1^4, u\in \{1,2\}}}{\text{maximize}} &\;\; \sum_{i=1}^{4}\tau_i\log\left(1+\frac{E^{(1)}_i+E^{(2)}_i}{\tau_i}\right)&&\\
			\text{subject to} & \;\;E_i^{(u)}\leq \tau_ig^{(u)}\left( \frac{e_i^{(u)}}{\tau_i}\right)-\tau_i\gamma_i^{(u)}\label{eq:p_nonEH1_c1}&& \\
			&\;\; \sum_{i=1}^{4}e_i^{(u)}\leq B^{(u)}&&\\
			&\;\; E_i^{(u)}, \tau_i \geq 0,\sum_{i=1}^{4}\tau_i\leq T&&\\
			&\;\; 0\leq e_i^{(u)}\leq \tau_iD_{0}^{(u)} &&\label{eq:p_nonEH1_c4}&&\\
			&\;\;  \eqref{eq:mac_c1}, E^{(1)}_1,E^{(2)}_1,E^{(2)}_2,E^{(1)}_3=0\label{eq:p_nonEH1_c5}&
		\end{align}
	\end{subequations}
	for all $i\in\{1,\ldots,4\}$ and $u=1,2$, where we have multiplied \eqref{eq:mac_c2} by $\tau_i$ to obtain \eqref{eq:p_nonEH1_c1}. Noting that the perspective preserves convexity \cite{EH-PC1}, we recognize that $\mathrm{(P5)}$ in \eqref{eq:p_MAC1} is a convex optimization problem and  it can be solved using standard numerical techniques. We represent the optimal objective value of $\mathrm{(P5)}$ in \eqref{eq:p_MAC1} by $R_{\rm NOMA-TDMA}$.

	\subsubsection{Comparison of the  Strategies}
	We first note that when the battery is ideal, only TDMA achieves the maximum sum-rate if the circuit cost is non-zero, i.e., $R_{\mathrm{TDMA}}> R_{\rm NOMA}$,  and both NOMA and the TDMA achieve the maximum sum-rate when the circuit cost is zero, i.e.,  $R_{\mathrm{TDMA}}=R_{\rm NOMA}$. When the battery is non-ideal, we have the following lemma. 
	\begin{lemma}\label{thm:sum-rate}
		When the discharging functions, $g^{(u)}(\cdot), u\in \{1,2\}$ are strictly concave,   
		\begin{enumerate}[leftmargin=*]
			\item $R_{\mathrm{TDMA}}$ in \eqref{eq:mac-II} may not be always greater than or equal to $R_{\mathrm{NOMA}}$ in \eqref{eq:cap_regionS12}. 
			\item The TDMA strategy does not always achieve the maximum sum-rate, $R_{\rm NOMA-TDMA}$. 
			% Let $\tau^{(u)*}$ be single user optimal transmit duration. Then, it is optimal to transmit in TDMA $\sum_{u=1}^{U}\tau^{(u)}\leq T$. 
		\end{enumerate}
	\end{lemma}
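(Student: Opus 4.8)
The plan is to prove both parts at once by exhibiting a single family of instances in which $R_{\mathrm{NOMA}}>R_{\mathrm{TDMA}}$. This settles part~1 directly. It also settles part~2, because the NOMA operating point is feasible for $\mathrm{(P5)}$ in \eqref{eq:p_MAC1}: setting $\tau_1=\tau_2=\tau_3=0$, $\tau_4=T$, $e_4^{(u)}=T\min(B^{(u)}/T,D_0^{(u)})$ and $E_4^{(u)}=T\bigl(g^{(u)}(e_4^{(u)}/T)-\gamma^{(u)}\bigr)$ satisfies \eqref{eq:p_nonEH1_c1}--\eqref{eq:p_nonEH1_c5} (the degenerate $\tau_i=0$ terms vanishing under the perspective convention) and attains the value \eqref{eq:cap_regionS12}; hence $R_{\rm NOMA-TDMA}\ge R_{\mathrm{NOMA}}>R_{\mathrm{TDMA}}$, so the TDMA allocation is strictly suboptimal for these instances.

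To build the instance I would take two identical users with a strictly concave discharge function $g^{(1)}=g^{(2)}=g$ satisfying $g(0)=0$ (for concreteness $g(d)=d-d^{2}/4$, so that $D_0=2$), common circuit cost $\gamma\ge 0$, common battery $B$, and horizon $T$, chosen so that $2B/T\le D_0$ (hence the caps $d_i^{(u)}\le D_0^{(u)}$ never bind at the operating points below) and $\gamma$ small. From \eqref{eq:cap_regionS12}, $R_{\mathrm{NOMA}}=T\log\bigl(1+2(g(B/T)-\gamma)\bigr)$. For the TDMA value I would argue that the optimum of $\mathrm{(P3)}$ is the symmetric split $\tau_2^{*}=\tau_3^{*}=T/2$: for a fixed window the battery is exhausted, so each summand of \eqref{eq:cap_region2} with $\tau_4=0$ reduces to the single-user objective $f(\tau)\triangleq\tau\log\bigl(1+g(B/\tau)-\gamma\bigr)$ of Lemma~\ref{lemma:p2p0}, which is concave (the perspective of the concave map $x\mapsto\log(1+g(x)-\gamma)$); by symmetry an optimal solution has $\tau_2=\tau_3$, and a short monotonicity check (below) pushes this common value to the boundary $\tau_2+\tau_3=T$. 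Thus $R_{\mathrm{TDMA}}=T\log\bigl(1+g(2B/T)-\gamma\bigr)$, and $R_{\mathrm{NOMA}}>R_{\mathrm{TDMA}}$ collapses to the scalar inequality $2g(B/T)-g(2B/T)>\gamma$.

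To close the loop I would use strict concavity: since $g(0)=0$, $g(B/T)=g\bigl(\tfrac12\cdot 0+\tfrac12\cdot\tfrac{2B}{T}\bigr)>\tfrac12 g(2B/T)$, so $2g(B/T)-g(2B/T)>0$; and because $g$ is increasing on $[0,D_0]$ this gap is at most $g(B/T)$, so any $\gamma\in[0,\,2g(B/T)-g(2B/T))$ is simultaneously feasible (it is $\le g(B/T)$) and makes the inequality strict. For the explicit data above with $T=1$, $B=1/2$ one computes $2g(1/2)-g(1)=\tfrac78-\tfrac34=\tfrac18$, so e.g.\ $\gamma=\tfrac1{16}$ gives $R_{\mathrm{NOMA}}=\log\tfrac74>\log\tfrac{27}{16}=R_{\mathrm{TDMA}}$, establishing both claims.

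The step I expect to be the main obstacle is the exact evaluation of $R_{\mathrm{TDMA}}$, i.e.\ justifying that the optimal TDMA split is symmetric with the time-sharing constraint active. Symmetry is immediate from the concavity of $f$ already used for $\mathrm{(P2)}$. The monotonicity claim $f'(\tau)\ge 0$ on $[B/D_0,T/2]$ I would reduce to the pointwise inequality $(1+g(x)-\gamma)\log(1+g(x)-\gamma)\ge x\,g'(x)$ with $x=B/\tau$: for $\gamma=0$ it follows from $x\,g'(x)\le g(x)$ (concavity plus $g(0)=0$) together with the elementary bound $(1+u)\log(1+u)\ge u$ for $u\ge 0$; for small $\gamma>0$ it persists by continuity, and since $f$ is concave it suffices to verify it at the single point $\tau=T/2$. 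Alternatively, one may bypass the general argument and simply check the two displayed numbers for the concrete instance by direct computation.
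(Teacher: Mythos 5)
Your proposal is correct and follows essentially the same route as the paper: a symmetric two-user instance with strictly concave $g$, comparing $T\log\bigl(1+2g(B/T)\bigr)$ against $T\log\bigl(1+g(2B/T)\bigr)$ and invoking strict concavity (with $g(0)=0$), then deducing part~2 from $R_{\rm NOMA-TDMA}\ge R_{\rm NOMA}$. You add some worthwhile rigor the paper glosses over --- justifying that the optimal TDMA split is the symmetric boundary point via concavity of the perspective objective and the monotonicity check $f'\ge 0$, and extending the counterexample to small $\gamma>0$ with an explicit numerical instance --- but the underlying argument is the same.
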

	\begin{proof}
		See Appendix C.
	\end{proof}
	In the example constructed in the proof,  $R_{\rm NOMA}=R_{\rm NOMA-TDMA}$ and the second statement directly follows from the first. In the numerical results, we construct an  example where $R_{\rm TDMA}<R_{\rm NOMA}<R_{\rm NOMA-TDMA}$. 
	We now note that in the NOMA strategy, both the users transmit all the time and in the TDMA strategy, the users transmit only in disjoint intervals.  
	Hence, the overall circuit energy consumed in the  NOMA strategy is higher than the TDMA strategy.  On the other hand, since the users transmit for a shorter duration of time in the TDMA strategy as compared to the  NOMA strategy, the transmit powers in the TDMA strategy are always higher than the  NOMA strategy. Hence, the losses due to the internal resistance is higher in the TDMA strategy than the  NOMA strategy. In  summary, the TDMA strategy reduces the loss in the circuit at the cost of increased loss across the internal resistance and, the  NOMA strategy, reduces the loss across the internal resistance at the cost of  increased loss in the circuit. The hybrid NOMA-TDMA optimally trades off between the losses in the circuit and the internal resistance.

\subsection{Plotting the maximum achievable rate region}
We now obtain the corner points on the maximum achievable rate region  given by \eqref{eq:cap_region1}-\eqref{eq:cap_region2}. We illustrate some important corner points in Fig. \ref{fig:MAC-cap}. 
\subsubsection{Points on the maximum-sum rate line, $BC$}
Let $E_i^{(u)*}, e_i^{(u)*}, \tau_i^*$ be the optimal solution to $\mathrm{(P5)}$ in \eqref{eq:p_MAC1}. Then, any point on the maximum sum-rate line, $BC$ (see Fig. \ref{fig:MAC-cap}) can be achieved by changing the decoding order and varying the time-sharing factor, as follows.  In the $4^{th}$ phase, for $\alpha\tau_4^*$ seconds, where $\alpha\in [0,1]$, we decode the user $1$ first and then cancel the interference to decode user $2$, and for the remaining $(1-\alpha)\tau_4^*$ seconds, we change the decoding order. Hence, for any given $\alpha$, we achieve the following individual rates on the maximum sum-rate line. 
\begin{align}
R^{(1)}(\alpha)=
&\tau_2^* \log\left(1+\frac{E_2^{(1)*}}{\tau_2^*}\right)+\alpha\tau_4^*\log\left(1+\frac{{E}_4^{(1)*}}{\tau_4^*+{E}_4^{(2)*}}\right)\nonumber&&\\
&\qquad\qquad+(1-\alpha)\tau_4^*\log\left(1+\frac{{E}_4^{(1)*}}{\tau_4^*}\right)&&\\
R^{(2)}(\alpha)=&\tau_3^*\log\left(1+\frac{{E}_3^{(2)*}}{\tau_3^*}\right)+\alpha\tau_4^*\log\left(1+\frac{{E}_4^{(2)*}}{\tau_4^*}\right)\nonumber&&\\
&\qquad+(1-\alpha)\tau_4^*\log\left(1+\frac{{E}_4^{(2)*}}{\tau_4^*+{E}_4^{(1)*}}\right)&&
\end{align}
When $\alpha=1$ and $\alpha=0$, we achieve points $B$  and $C$, respectively.  
\subsubsection{Maximum achievable rate of an user when the other user transmits at its maximum achievable rate}

\begin{figure}[t]
	\centering
	\begin{tikzpicture}[scale=1.9]
	\draw [<->,thick] (0,2.3) node (yaxis) [above] {\scriptsize $R^{(2)}$}
	|- (2.3,0) node (xaxis) [right] {\scriptsize  $R^{(1)}$};
	
	\draw (0,2.2) node [left]{\scriptsize $C^{(2)}$ }-- (1.25,2.2) ;
	\draw (2.2,0) node [below,xshift=0.1cm]{\scriptsize $C^{(1)}$ } -- (2.2,1.25) ;
	\draw [blue](1.55,2.075) --(2.075,1.55)  node[midway,rotate=-45, xshift=0.5cm, yshift=0.6cm]{\scriptsize max sum-rate};
	\draw [->] (1.95,1.9) -- (1.75,1.9);
	\draw [dashed] (1.25,2.2) -- (1.25,0) node[below] {\scriptsize $R^{(1),C^{(2)}}$} ;
	\draw [dashed]  (2.075,1.55) -- (2.075,0) node[left,yshift=-0.25cm,xshift=0.2cm] {\scriptsize $R^{(1),C}$} ;
	\draw [dashed] (2.2,1.25) -- (0,1.25) node[left] {\scriptsize $R^{(2),C^{(1)}}$} ; 
	\draw [red](1.25,2.2) to[bend left] (1.55,2.075);
	\draw [red](2.2,1.25) to[bend right] (2.075,1.55);
	\fill (1.55,2.075) circle [radius=1pt] node[right]{\scriptsize $B$ } ;
	\fill (2.075,1.55) circle [radius=1pt] node[right]{\scriptsize $C$ } ;
	\fill (1.25,2.2) circle [radius=1pt] node[above] {\scriptsize $A$} ;
	\fill (2.2,1.25) circle [radius=1pt] node[right] {\scriptsize $D$} ;
	\end{tikzpicture}
	\caption{An illustration of the maximum achievable rate region of the generalized-TDMA strategy.  All the points on the $BC$ segment are the maximum sum-rate points. }
	\label{fig:MAC-cap}
\end{figure}
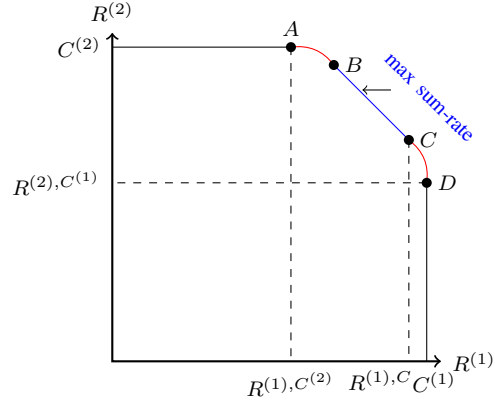

The maximum achievable rate in user $1$ (user $2$)  when user $2$ (user $1$) transmits at its maximum achievable rate is specified by the point $A$ ($D$) in Fig. \ref{fig:MAC-cap}. 
We now obtain the optimal value of the optimization variables to achieve points $A$ and $D$.
Without loss of generality, consider the case when user $2$ transmits at its maximum achievable rate, $C^{(2)}$. The maximum achievable rate is achieved when user $1$ is decoded first and the interference from user $1$ is canceled out before decoding the user $2$. The optimal solution can be obtained by solving $\mathrm{(P2)}$ in \eqref{eq:p2p-3} for user $2$. Let $\tau_3^{(2),C^{(2)}}$ be the optimal transmission duration  in this case, where the superscript specifies that the user $2$ transmits at its maximum achievable rate. To achieve the maximum achievable rate, the user $2$ transmits at the constant power given by $P_3^{(2),C^{(2)}}=g^{(2)}(B^{(2)}/\tau_3^{(2),C^{(2)}})-\gamma^{(2)}$ over $\tau_3^{(2),C^{(2)}}$ seconds. Then, the achievable rate at user $1$ is,
\begin{align}\label{eq:rate1}
R^{(1),C^{(2)}}=\tau_2\log\left(1+\frac{{E}_2^{(1)}}{\tau_2}\right)+\tau_4\log\left(1+\frac{{E}_4^{(1)}/\tau_4}{1+P_3^{(2),C^{(2)}}}\right)&&
\end{align}
where we have noted that when the user $1$ transmits in the same interval as user $2$, the signal from user $2$ acts as the noise  for user $1$.   Recalling that $\tau_2$ is the duration over which only user $1$ transmits, we have the following constraints.  
\begin{align}\label{eq:01}
\tau_2\leq T-\tau_3^{(2),C^{(2)}},\;\; \tau_4\leq  \tau_3^{(2),C^{(2)}}
\end{align}
Now, the optimal rate in user $1$ when user $2$ transmits at its maximum achievable rate can be obtained by solving the following  problem. 
\begin{subequations}\label{eq:p_MAC2}
	\begin{align}
	\underset{\substack{{E}_2^{(1)},{E}_{4}^{(1)},{e}_2^{(1)},{e}_{4}^{(1)},\\ \tau_2,\tau_4}}{\text{maximize}} &\;\;R^{(1),C^{(2)}} &&\\
	\text{subject to}&\;\;\eqref{eq:p_nonEH1_c1}-\eqref{eq:p_nonEH1_c5}, \eqref{eq:01}
	\end{align}
\end{subequations}
where \eqref{eq:p_nonEH1_c1}-\eqref{eq:p_nonEH1_c5} are considered only for $u=1$. 
Along the above lines, we can easily find the maximum rate at which user $2$ can transmit when the user $1$ transmits at its maximum achievable rate, $C^{(1)}$. 

\subsubsection{Obtaining curves $AB$ and $CD$ in  Fig. \ref{fig:MAC-cap}}
When the internal resistance is zero, there is a unique maximum sum-rate point that is achieved when $\tau_4=0$ and $\tau_2+\tau_3=T$ \cite{GP}. In such a case, the points $B$ and $C$ coincide. On the other hand, when the internal resistance is very high, to avoid the losses due to high discharge rate, it is optimal to transmit over the entire frame duration, i.e., $\tau_4=T$ is optimal. In such a case, the point $A$ coincides with point $B$ and the point $C$ coincides with $D$\footnote{In the ideal case, when both the circuit cost and the internal resistance are zero, we will have the similar scenario.}. However, when the internal resistance is not sufficiently high, we may have $0<\tau_4<T$. Hence, the points $A$ ($C$) and $B$ ($D$) may not coincide. Now, to achieve the $CD$ curve, we need to decode user $2$ first and cancel the interference on user $1$. Hence, the rate achieved by user $1$ is  $R^{(1)}=\tau_2\log(1+{E}_2^{(1)}/\tau_2)+\tau_4\log(1+{E}_4^{(1)}/\tau_4)$. 
Let $R^{(1),0}$ be the rate in user $1$ for which we would like to find the rate in user $2$ on the $CD$ curve. This rate can  be obtained by solving the following convex optimization problem. 
\begin{subequations}\label{eq:p_MAC3}
	\begin{align}
	\tilde{R}=\underset{\substack{{E}_i^{(u)},{e}_i^{(u)},\tau_1}}{\text{maximize}} &\;\;R^{(1)}+R^{(2)}\\
	\text{subject to}&\;\;\eqref{eq:p_nonEH1_c1}-\eqref{eq:p_nonEH1_c5}\\
	&\;\; R^{(1)}\geq R^{(1),0} \label{eq:CD}
	\end{align}
\end{subequations}
for $i=1,\ldots,4$ and $u=1,2$. The  maximum sum-rate does not increase with  $R^{(1)}$ in $[R^{(1),C},C^{(1)}]$. Hence, the maximum rate in user $2$ when user $1$ transmits at $R^{(1),0}\in [R^{(1),C},C^{(1)}] $ is given by $\tilde{R}-R^{(1),0}$.  Along the above lines, we can achieve the rates on the curve $AB$.

\section{Multiple Access Channel with $U\geq 1$ Users}\label{sec:Uusers}
In this section, we generalize the results in the previous sections for the case when the number of users is arbitrary. 
As in the two-user case, it may not be optimal for all the users to transmit over the entire time duration of $T$ seconds. Hence, we first obtain the optimal frame structure. 
\subsubsection{Optimal Frame Structure}
The number of users transmitting  at any point in time can be $0,\ldots,U$. Note that there are $2^U$ such combinations given by the set of all subsets (power set) of $\mathcal{U}$, represented by $\mathcal{P}(\mathcal{U})$. Now, in order to exploit all the available degrees of freedom, we divide the total available duration of $T$ seconds into $2^U$ phases. We then order the elements in  $\mathcal{P}(\mathcal{U})$ in any manner and represent the  $i$th element (which is a set) in $\mathcal{P}(\mathcal{U})$ by $\mathcal{U}_i$. The users in $\mathcal{U}_i$ transmit in phase $i$. We assume that the length of phase $i$ is $\tau_i$ seconds. If $\mathcal{U}_i$ contains more than one users, the signals from all the users are superimposed and transmitted. The information is decoded by successive interference cancellation at the receiver. 
\subsubsection{Maximum Achievable Rate Region}
Let $E^{(u)}_i$ and $e^{(u)}_i$ denote the total transmit energy and the energy drawn from the battery in user $u\in \mathcal{U}_i$ in phase $i\in\{1,\ldots,2^U\}$, respectively.
Based on the optimal frame structure,  as in \eqref{eq:mac_c1} for the two-user case, we apply the following constraints. 
\begin{align}\label{eq:macgen_c1}
	E^{(u)}_i,e^{(u)}_i,\gamma^{(u)}_i\triangleq 0, \quad \forall\; u\notin \mathcal{U}_i,\; i\in \mathcal{N}
\end{align}
where $\mathcal{N}=\{1,2,\ldots,2^U\}$.  
%Now the maximum achievable rate region in  phase $i$ is given by $\mathcal{C}\left(P^{(u)}_i,u\in\mathcal{U}\right)$, where $\mathcal{C}(\cdot)$ is defined in \eqref{eq:instThroughputRegion}. 
Now, generalizing the results in the two-frame case along the lines in the  proof of Proposition 1 in \cite{OnlineOzgur}, for an arbitrary $U$ and  given $\tau_i$'s and  $E_i^{(u)}$'s, the maximum achievable rate region is given by 
\begin{align}\label{eq:genCapRegion}
	&\mathcal{R}\left(E^{(u)}_i,\tau_i,u\in\mathcal{U},i\in\mathcal{N}\right)=\left\{R^{(u)}:\sum_{u\in \mathcal{S}}R^{(u)}\leq \sum_{i=1}^{2^U} \tau_i \log\left(1+
	\frac{\sum_{u\in \mathcal{S}} E^{(u)}_i}{\tau_i}\right),\forall \mathcal{S}\subseteq \mathcal{U} \right\} &&
\end{align}
Finally,  the maximum achievable rate region of the MAC can be obtained by taking the convex hull of the union of the maximum achievable rate regions in \eqref{eq:genCapRegion} over all feasible  $\{E^{(u)}_i,\tau_i,u\in \mathcal{U},i\in \mathcal{N}\}$, i.e.,
\begin{align}
	\mathcal{C}=\bigcup_{\{E_i^{(u)},\tau_i,i\in \mathcal{N},u\in\mathcal{U}\}}\mathcal{R}\left(E_i^{(u)},\tau_i, u\in \mathcal{U},i\in \mathcal{N}\right)
\end{align}
subject to \eqref{eq:mac_c2}-\eqref{eq:mac_c5} and  \eqref{eq:macgen_c1}, where $\mathcal{R}(\cdot)$ is defined in \eqref{eq:genCapRegion}. 

\subsubsection{Maximum Sum-Rate}
From \eqref{eq:genCapRegion}, to maximize the sum-rate when the number of users is arbitrary, we need to solve the following optimization problem. 
\begin{subequations}\label{eq:p_MACU}
	\begin{align}
		\mathrm{(P6)}:\underset{\substack{E_i^{(u)},e_i^{(u)},\tau_i\\ i\in \mathcal{N},u\in \mathcal{U}}}{\text{maximize}} &\;\; \sum_{i=1}^{2^U}\tau_i\log\left(1+\frac{\sum_{u=1}^{U}E^{(u)}_i}{\tau_i}\right)\;\;\;&&\\
		\text{subject to} & \;\; \eqref{eq:p_nonEH1_c1}-\eqref{eq:p_nonEH1_c4},  \eqref{eq:macgen_c1} &&
	\end{align}
\end{subequations}
As in the two-user case, noting that perspective preserves convexity, $\mathrm{(P6)}$ in \eqref{eq:p_MACU} is convex and we solve it numerically. 

We now generalize Lemma \ref{thm:sum-rate} for $U$ users. Let $\mathcal{I}_n$ be the set of phase indices where some combination of $n\in\{0,\ldots,U\}$ users transmit. Then, as in the two-user case, for NOMA, $\tau_i\triangleq 0, \; \forall\; i\in \{\mathcal{I}_0, \ldots,\mathcal{I}_{U-1}\}$ and for TDMA,  $\tau_i\triangleq 0, \; \forall\; i\in \{\mathcal{I}_2, \ldots,\mathcal{I}_U\}$. Then, we have the following theorem. 
\begin{theorem}\label{thm:sum-rateUusers}
Lemma \ref{thm:sum-rate} generalizes to any $U\geq 2$.
\end{theorem}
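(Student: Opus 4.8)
The plan is to obtain the $U$-user statement by embedding the two-user counterexample of Lemma~\ref{thm:sum-rate} into a $U$-user channel whose extra users are present but carry no resources. Concretely, I would reuse the instance built in Appendix~C — strictly concave $g^{(1)},g^{(2)}$, circuit costs $\gamma^{(1)},\gamma^{(2)}$, battery energies $B^{(1)},B^{(2)}$ and horizon $T$ for which $R_{\mathrm{TDMA}}<R_{\mathrm{NOMA}}$ — leave users $1$ and $2$ untouched, and for every $u\in\{3,\dots,U\}$ set $B^{(u)}=\gamma^{(u)}=0$ while choosing $g^{(u)}$ to be any admissible strictly concave discharge function (so the theorem's hypothesis ``$g^{(u)}$ strictly concave for all $u$'' is met). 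Since $g^{(u)}(0)=0$, a user with $B^{(u)}=\gamma^{(u)}=0$ is forced to transmit at zero power and contributes zero rate in every phase; this single observation carries the whole argument.

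The key steps, in order: first I would check that the silent users leave both $R_{\mathrm{NOMA}}$ and $R_{\mathrm{TDMA}}$ unchanged. In NOMA every user transmits over all of $[0,T]$, and for $u\ge 3$ the optimal discharge is $d_4^{(u)*}=\min(B^{(u)}/T,D_0^{(u)})=0$, so $P_4^{(u)*}=g^{(u)}(0)-\gamma^{(u)}=0$ and the $u$-th summand drops out of \eqref{eq:cap_regionS12}; hence $R_{\mathrm{NOMA}}^{(U)}=R_{\mathrm{NOMA}}^{(2)}$. In TDMA any disjoint slot of length $\tau$ given to a user $u\ge 3$ yields rate $\tau\log(1+g^{(u)}(0)-\gamma^{(u)})=0$ (equivalently, solving $\mathrm{(P2)}$ for such a user gives value $0$ by Lemma~\ref{lemma:p2p0}) while consuming the common budget $\sum_i\tau_i\le T$, so an optimal allocation assigns those users zero time and $\mathrm{(P3)}$ reduces to its two-user form, i.e.\ $R_{\mathrm{TDMA}}^{(U)}=R_{\mathrm{TDMA}}^{(2)}$. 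Second, NOMA is the hybrid strategy with $\tau_i=0$ for every phase other than the ``all $U$ users'' phase, so the optimal value of $\mathrm{(P6)}$ satisfies $R_{\rm NOMA-TDMA}^{(U)}\ge R_{\mathrm{NOMA}}^{(U)}$. Chaining these, $R_{\mathrm{TDMA}}^{(U)}=R_{\mathrm{TDMA}}^{(2)}<R_{\mathrm{NOMA}}^{(2)}=R_{\mathrm{NOMA}}^{(U)}\le R_{\rm NOMA-TDMA}^{(U)}$, which proves statement~1 ($R_{\mathrm{TDMA}}$ need not be at least $R_{\mathrm{NOMA}}$) and, a fortiori, statement~2 (TDMA need not attain $R_{\rm NOMA-TDMA}$).

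The step I expect to require the most care is the feasibility of the silent users under the NOMA constraint $P+\gamma\le g(d)$: it holds only because their circuit cost is set to zero and $g^{(u)}(0)=0$, since a positive circuit cost with an empty battery would make the NOMA configuration infeasible. A secondary, optional refinement — not needed for the theorem — is to upgrade the inequality $R_{\rm NOMA-TDMA}^{(U)}\ge R_{\mathrm{NOMA}}^{(U)}$ to the exact identity $R_{\rm NOMA-TDMA}^{(U)}=R_{\rm NOMA-TDMA}^{(2)}$: in $\mathrm{(P6)}$ the variables of users $u\ge 3$ are forced to zero (from $\sum_i e_i^{(u)}\le B^{(u)}=0$ and $g^{(u)}(0)=0$), so the objective and the remaining constraints see a phase only through the restriction of its active set to $\{1,2\}$, and since the perspective map $\tau\mapsto\tau\log(1+E/\tau)$ is positively homogeneous and concave, hence superadditive, merging all phases of a common type (only user~$1$, only user~$2$, both) is without loss, after which $\mathrm{(P6)}$ collapses to $\mathrm{(P5)}$. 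A more self-contained route is to build a genuinely symmetric $U$-user instance and redo the $R_{\mathrm{TDMA}}$-versus-$R_{\mathrm{NOMA}}$ comparison directly; this is essentially what the numerical section does, there exhibiting the strict chain $R_{\mathrm{TDMA}}<R_{\mathrm{NOMA}}<R_{\rm NOMA-TDMA}$.
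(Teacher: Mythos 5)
Your proof is correct, but it takes a different route from the paper's. The paper does not pad a two-user counterexample with dummy users; it constructs a genuinely symmetric $U$-user instance directly: $B^{(u)}=B$, $g^{(u)}=g$, $\gamma^{(u)}=0$ for all $u$, whence $R_{\mathrm{NOMA}}=T\log\left(1+Ug\left(B/T\right)\right)$ while the symmetric-optimal TDMA allocation $\tau=T/U$ per user gives $R_{\mathrm{TDMA}}=T\log\left(1+g\left(UB/T\right)\right)$, and strict concavity (with $g(0)\geq 0$) yields $g\left(UB/T\right)<Ug\left(B/T\right)$, so $R_{\mathrm{TDMA}}<R_{\mathrm{NOMA}}\leq R_{\rm NOMA-TDMA}$. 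Your embedding argument instead treats the two-user lemma as a black box and appends $U-2$ users with $B^{(u)}=\gamma^{(u)}=0$; the verification that such users are feasible under $P+\gamma\leq g(d)$ and contribute nothing to either $R_{\mathrm{NOMA}}$ or $R_{\mathrm{TDMA}}$ is handled correctly, and the chain $R_{\mathrm{TDMA}}^{(U)}=R_{\mathrm{TDMA}}^{(2)}<R_{\mathrm{NOMA}}^{(2)}=R_{\mathrm{NOMA}}^{(U)}\leq R_{\rm NOMA-TDMA}^{(U)}$ is sound. What each buys: your reduction is more modular (any two-user counterexample lifts immediately, with no new computation), but it rests on degenerate users with empty batteries, which is a corner case of the model one might prefer to avoid; the paper's construction is marginally more work but exhibits a counterexample in which every user is genuinely active, and it makes the quantitative gap between the two strategies explicit as a function of $U$. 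Your closing remark already anticipates the paper's route, so the two arguments are complementary rather than in conflict.
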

\begin{proof}
	See Appendix D. 
\end{proof}
From the above theorem, we note that  due to the non-linear discharging function and the circuit cost, simpler strategies,  such as the TDMA and NOMA,  are no longer optimal.
Further, the number of optimization variables increase exponentially in $U$ and hence,  the complexity in solving $\mathrm{(P6)}$ in \eqref{eq:p_MACU} is exponential in $U$. 
We now make the following observation on the structure of the optimal solution to $\mathrm{(P6)}$ in \eqref{eq:p_MACU}. 
\begin{proposition}\label{thm:uC2}
In the optimal solution to $\mathrm{(P6)}$ in \eqref{eq:p_MACU}, we have, $\tau_i^*\geq 0$ for all $i\in \{\mathcal{I}_n,\mathcal{I}_{n+1}\}$ and $\tau_i^*=0$ for all $i\notin \{\mathcal{I}_n,\mathcal{I}_{n+1}\}$  for some $n\in \{0,1,\ldots,U-1\}$.
\end{proposition}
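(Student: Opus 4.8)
The plan is to work with the convex reformulation $\mathrm{(P6)}$ in \eqref{eq:p_MACU}, to group its $2^U$ phases by their \emph{level} $n=|\mathcal{U}_i|$ (the number of active users), and to show that an optimal solution can be taken to charge positive time only to phases whose levels lie in a window $\{n,n+1\}$. Two preliminary facts carry the argument. First, if in an active phase $i$ some $u\in\mathcal{U}_i$ has $E_i^{(u)*}=0$, then \eqref{eq:p_nonEH1_c1} is slack for $u$ and we may set $e_i^{(u)}=0$ and treat $u$ as off in phase $i$ without changing the objective; so we may assume every user counted in an active phase transmits strictly positive energy, making the \emph{level} of an active phase unambiguous. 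Second, the per-phase term $h(\tau,\mathcal{E}):=\tau\log(1+\mathcal{E}/\tau)$ with $\mathcal{E}=\sum_u E^{(u)}$ is concave and positively homogeneous of degree one, hence superadditive, $h(\tau_1+\tau_2,\mathcal{E}_1+\mathcal{E}_2)\ge h(\tau_1,\mathcal{E}_1)+h(\tau_2,\mathcal{E}_2)$ with equality only when $(\tau_1,\mathcal{E}_1)$ and $(\tau_2,\mathcal{E}_2)$ are proportional; and since $g^{(u)}$ is concave with $g^{(u)}(0)\ge 0$, the map $\tau\mapsto \tau g^{(u)}(e/\tau)$ is nondecreasing and concave, so redistributing a fixed drawn energy over several phases keeps \eqref{eq:p_nonEH1_c1} satisfied phase by phase by Jensen's inequality. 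These are the only structural inputs.

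The argument is by contradiction via an exchange. Suppose an optimal solution charges positive time to phases of three or more distinct levels, and let $a<c$ be the extreme ones, so $c\ge a+2$. I would construct a feasible point, obtained by an exchange between a level-$c$ phase and a lower-level phase, that (i) leaves every user's total drawn energy $\sum_i e_i^{(u)}$ and total on-time $\sum_{i:u\in\mathcal{U}_i}\tau_i$, hence its total circuit energy, unchanged --- so that feasibility of \eqref{eq:p_nonEH1_c1}--\eqref{eq:p_nonEH1_c4} and the time budget $\sum_i\tau_i\le T$ are preserved by the Jensen step above --- and (ii) does not decrease the objective, by superadditivity of $h$; iterating collapses the used levels to two consecutive ones. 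For $U=2$ this is explicit: the only way to touch three levels is to use the idle level together with the both-user phase, and there (after the reduction) $E_4^{(1)*},E_4^{(2)*}>0$; unbundling a slice of the both-user phase into the two single-user phases while absorbing the idle time yields, after a short calculation using superadditivity of $h$ together with $(1+x)(1+y)>1+x+y$ for $x,y>0$, a \emph{strictly} larger objective --- a contradiction. Hence for $U=2$ the used levels lie in $\{\mathcal{I}_0,\mathcal{I}_1\}$ or in $\{\mathcal{I}_1,\mathcal{I}_2\}$, exactly the claim. For general $U$ one iterates the analogous move: unbundle one user who is active in a level-$c$ phase but absent from a chosen lower-level phase, shrink the level-$c$ part, and reinsert that user's energy and drawn power into the lower-level phase (raising it to level $a+1$), arranging the slice lengths so that on-times and battery draws are conserved.

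The main obstacle is making the general-$U$ exchange provably non-decreasing in the objective. A naive single-user swap between an arbitrary level-$c$ phase and an arbitrary level-$a$ phase can actually lower the objective: the marginal gain from adding a user to a phase of aggregate transmit power $z$ is governed by $1/(1+z)$, which is \emph{larger} in a low-level phase that happens to carry high aggregate power, so the rate lost at level $c$ need not be recouped at level $a$. One must therefore choose which level-$c$ phase, which user, and which receiving phase with care --- for instance routing the freed energy to a lowest-aggregate-power compatible phase, or performing a coordinated reshuffle across several phases at once --- while re-verifying \eqref{eq:p_nonEH1_c1} (using concavity of $g^{(u)}$ for the changed on-times), \eqref{eq:p_nonEH1_c4}, and the time budget simultaneously for all affected users. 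An alternative that sidesteps the combinatorics is to read the claim off the KKT conditions of the convex $\mathrm{(P6)}$: stationarity in $\tau_i$ forces every phase with $\tau_i^*>0$ to attain a common value density equal to the time-budget multiplier $\mu$, and it then remains to show that, at the optimal multipliers, only user-subsets of two consecutive cardinalities can attain it --- which again reduces to controlling how a phase's value changes when one user is added. Either route turns on that single monotonicity-in-the-number-of-users fact, and pinning it down cleanly is the crux.
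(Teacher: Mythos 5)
Your route is genuinely different from the paper's. The paper proves Proposition \ref{thm:uC2} by an incremental energy-allocation argument: it imagines the users committing energy to the schedule in small increments $\Delta B$, invokes Proposition \ref{lemma:p2p} to argue that the single-user optimal durations grow linearly until they fill $T$, and then reasons that overlaps are introduced one cardinality level at a time because joining a window already carrying a superposition (hence higher aggregate power) is dominated by joining a window with a single transmitter. You instead run a local exchange argument directly on the convex program \eqref{eq:p_MACU}, using superadditivity of $h(\tau,\mathcal{E})=\tau\log(1+\mathcal{E}/\tau)$ and concavity of the perspective $\tau\mapsto \tau g^{(u)}(e/\tau)$ to collapse any solution touching three or more levels. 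For $U=2$ your argument is complete and, as far as it goes, more rigorous than the paper's: the reduction to strictly positive $E_i^{(u)}$ in active phases, the conservation of each user's on-time and drawn energy (hence of circuit energy and of feasibility of \eqref{eq:p_nonEH1_c1}), and the strict inequality $(1+x)(1+y)>1+x+y$ for $x,y>0$ give a clean contradiction whenever idle time coexists with a genuinely two-user phase.

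The gap is the one you name yourself, and it is fatal for $U\geq 3$: the central exchange step --- that whenever levels $a$ and $c\geq a+2$ both carry positive time there is a choice of donor phase, user, and recipient phase for which on-times, drawn energies, and the time budget are all conserved \emph{and} the objective does not decrease --- is asserted but never established. Your own observation explains why it is not routine: the marginal rate of adding transmit energy to a phase scales like $1/(1+z)$ in that phase's aggregate power $z$, which need not be smaller in the lower-level phase, so the naive swap can strictly lose; and the KKT alternative you sketch leaves exactly the same monotonicity-in-cardinality fact unproven. Since the whole induction (``iterating collapses the used levels to two consecutive ones'') rests on this step, the proposal does not yet prove the statement for general $U$. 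It is only fair to note that the paper's Appendix E is itself heuristic at the corresponding point (it asserts, without a quantitative comparison, that a third user will always prefer the single-user window over the superposed one), so the missing lemma --- a precise statement of when adding a user to a lower-level phase recoups the loss from removing it from a higher-level one, accounting for circuit cost and the discharge nonlinearity --- is the real content that neither argument currently supplies; completing your exchange with a careful choice of recipient phase (e.g., the compatible phase of lowest aggregate power) would yield a proof strictly stronger than the paper's.
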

\begin{proof}
	See Appendix E. 
\end{proof}
The above proposition is intuitive. For the two-user case, the proposition implies that when the users are superimposed, there does not exist any time instant where  no transmission takes place, in the optimal solution. That is,  the users must first occupy the entire frame duration with individual transmissions, before overlapping their transmissions. A similar intuition can be applied to more than two users. 
%In general, the users must first occupy the entire frame durations with $n$ users transmitting simultaneously, before 
%The above theorem says that it is suboptimal to superimpose a set of any $(n+1)$ users unless all the subsets of these 

%Using the above theorem, we can obtain the maximum sum-rate using a single dimensional search. 
%However, one still needs to solve $U$ convex optimization problems. 

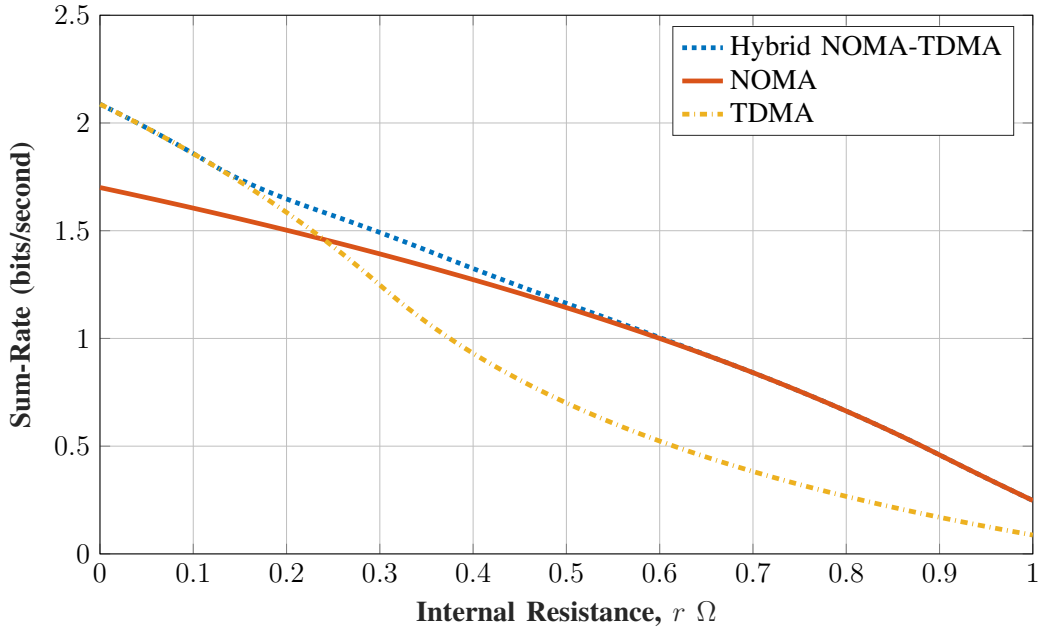
\begin{figure}[t]
	\centering
	% This file was created by matlab2tikz.
%
%The latest updates can be retrieved from
%  http://www.mathworks.com/matlabcentral/fileexchange/22022-matlab2tikz-matlab2tikz
%where you can also make suggestions and rate matlab2tikz.
%
\definecolor{mycolor1}{rgb}{0.00000,0.44700,0.74100}%
\definecolor{mycolor2}{rgb}{0.85000,0.32500,0.09800}%
\definecolor{mycolor3}{rgb}{0.92900,0.69400,0.12500}%
\begin{tikzpicture}[scale=.9]

\begin{axis}[%
width=5.425in,
height=3.135in,
at={(0.91in,0.565in)},
scale only axis,
xmin=0,
xmax=1,
xlabel style={font=\bfseries\color{white!15!black}},
xlabel={Internal Resistance, $r$ \si{\ohm}},
ymin=0,
ymax=2.5,
ylabel style={font=\bfseries\color{white!15!black}},
ylabel={Sum-Rate  (bits/second)},
axis background/.style={fill=white},
xmajorgrids,
ymajorgrids,
legend style={legend cell align=left, align=left, draw=white!15!black}
]
\addplot [color=mycolor1, line width=2.0pt,dotted]
  table[row sep=crcr]{%
0	2.08746283447698\\
0.00999999999999979	2.06608917261522\\
0.02	2.04439409755097\\
0.0299999999999998	2.02236778687079\\
0.04	1.99999996719753\\
0.0499999999999998	1.97727987983804\\
0.0600000000000001	1.9541963031066\\
0.0699999999999998	1.93073733038084\\
0.0800000000000001	1.90689058635571\\
0.0899999999999999	1.8826430266344\\
0.1	1.85798098920362\\
0.11	1.83289000874348\\
0.12	1.80735488021961\\
0.13	1.78255837683578\\
0.14	1.75968199456071\\
0.15	1.73837448538365\\
0.16	1.71835717967986\\
0.17	1.69940788807907\\
0.18	1.68134725253074\\
0.19	1.66402881605365\\
0.2	1.64733180789951\\
0.21	1.63115569500263\\
0.22	1.615416272488\\
0.23	1.60004224582667\\
0.24	1.58496245360892\\
0.25	1.56985559429882\\
0.26	1.55458884863371\\
0.27	1.53915877514575\\
0.28	1.5235619385763\\
0.29	1.50779463379963\\
0.3	1.49185303812335\\
0.31	1.47573341645083\\
0.32	1.45943161534461\\
0.33	1.44294348805778\\
0.34	1.42626473258533\\
0.35	1.40939092686745\\
0.36	1.39231740519686\\
0.37	1.37510019986577\\
0.38	1.35803915784726\\
0.39	1.34115860832255\\
0.4	1.32444005023246\\
0.41	1.30786629481124\\
0.42	1.29142140982004\\
0.43	1.27509058788233\\
0.44	1.25886006035283\\
0.45	1.24271687464022\\
0.46	1.22664897372103\\
0.47	1.21064502618403\\
0.48	1.19469436932176\\
0.49	1.17878699477706\\
0.5	1.16291344007773\\
0.51	1.14706469249028\\
0.52	1.1312322826405\\
0.53	1.11540817998979\\
0.54	1.09958467021707\\
0.55	1.08375446156753\\
0.56	1.06791058743965\\
0.57	1.05204635991631\\
0.58	1.03615538988643\\
0.59	1.02023153404807\\
0.6	1.00426890405172\\
0.61	0.988261809445712\\
0.62	0.972204743352506\\
0.63	0.956092457121421\\
0.64	0.939919772789819\\
0.65	0.923681726340524\\
0.66	0.907373494080581\\
0.67	0.8909903558175\\
0.68	0.874527779174267\\
0.69	0.8579812359185\\
0.7	0.84130224246561\\
0.71	0.824428418586807\\
0.72	0.807354907528754\\
0.73	0.790076916577355\\
0.74	0.772589487275064\\
0.75	0.754887484969777\\
0.76	0.736965575937853\\
0.77	0.718818213759429\\
0.78	0.700439708411021\\
0.79	0.681824035116115\\
0.8	0.662965009813822\\
0.81	0.643856188778241\\
0.82	0.624490855168324\\
0.83	0.604862056086659\\
0.84	0.584962493615285\\
0.85	0.56478461442364\\
0.86	0.544320514792551\\
0.87	0.523561951182219\\
0.88	0.502500336236393\\
0.89	0.48112668184698\\
0.9	0.459431593695896\\
0.91	0.437649194891593\\
0.92	0.416017253559328\\
0.93	0.394531840587462\\
0.94	0.373189106683223\\
0.95	0.351985324750038\\
0.96	0.33091687344723\\
0.97	0.309980237149885\\
0.98	0.289171997741001\\
0.99	0.268488831818474\\
1	0.247927510325278\\
};
\addlegendentry{Hybrid NOMA-TDMA}

\addplot [color=mycolor2, line width=2.0pt]
  table[row sep=crcr]{%
0	1.70043971814109\\
0.03	1.6724253419715\\
0.0600000000000001	1.64385618977472\\
0.0900000000000001	1.61470984411521\\
0.12	1.58496250072116\\
0.15	1.55458885167764\\
0.18	1.52356195605701\\
0.21	1.49185309632967\\
0.24	1.4594316186373\\
0.27	1.4262647547021\\
0.3	1.39231742277876\\
0.33	1.35755200461808\\
0.36	1.32192809488736\\
0.39	1.28540221886225\\
0.42	1.24792751344359\\
0.45	1.20945336562895\\
0.48	1.16992500144231\\
0.51	1.12928301694497\\
0.54	1.08746284125034\\
0.56	1.05889368905357\\
0.58	1.02974734339405\\
0.6	1\\
0.62	0.969626350956481\\
0.64	0.938599455335857\\
0.66	0.906890595608518\\
0.68	0.874469117916141\\
0.7	0.841302253980942\\
0.72	0.807354922057604\\
0.74	0.772589503896927\\
0.76	0.736965594166206\\
0.78	0.700439718141092\\
0.8	0.662965012722429\\
0.82	0.624490864907794\\
0.84	0.584962500721156\\
0.86	0.544320516223811\\
0.88	0.502500340529183\\
0.9	0.459431618637297\\
0.93	0.394531843844202\\
0.96	0.330916878114617\\
0.99	0.268488835925902\\
1	0.247927513443585\\
};
\addlegendentry{NOMA}

\addplot [color=mycolor3, dashdotted, line width=2.0pt]
  table[row sep=crcr]{%
0	2.08746280367061\\
0.02	2.04439410936358\\
0.04	1.99999999272152\\
0.0600000000000001	1.95419630221903\\
0.0800000000000001	1.90689058666929\\
0.1	1.8579809873683\\
0.12	1.80735491687175\\
0.14	1.75488748078842\\
0.16	1.70043969262274\\
0.18	1.6438561848936\\
0.2	1.58496249895663\\
0.22	1.52356195469214\\
0.24	1.45943161553749\\
0.26	1.39231740714735\\
0.28	1.32192809209725\\
0.29	1.28540220983092\\
0.31	1.21071060684157\\
0.32	1.17492567547036\\
0.33	1.1404812044843\\
0.34	1.10729399433694\\
0.35	1.07528811157737\\
0.36	1.04439410074127\\
0.37	1.01454830741879\\
0.38	0.985692158288131\\
0.39	0.95777175556403\\
0.4	0.930737331513265\\
0.41	0.904542832700758\\
0.42	0.879145594122981\\
0.43	0.854506013879863\\
0.44	0.830587226288217\\
0.45	0.807354917916476\\
0.46	0.784777071613338\\
0.47	0.762823759220363\\
0.48	0.741466983183167\\
0.5	0.700439713641217\\
0.52	0.661504053137162\\
0.54	0.624490856522264\\
0.56	0.589249849374414\\
0.58	0.555646923355163\\
0.6	0.523561943826478\\
0.62	0.492886901777707\\
0.64	0.463524357420866\\
0.66	0.435386127246013\\
0.68	0.408392180401572\\
0.7	0.382469631380979\\
0.72	0.357551998886686\\
0.74	0.333578439833417\\
0.76	0.310493185917605\\
0.78	0.28824495254959\\
0.8	0.266786524634965\\
0.82	0.246074325247382\\
0.84	0.226068065195554\\
0.86	0.206730431672287\\
0.88	0.188026799459392\\
0.9	0.169924992207286\\
0.92	0.152395067503963\\
0.94	0.135409109928071\\
0.96	0.118941062645755\\
0.98	0.102966568762255\\
1	0.0874628299373721\\
};
\addlegendentry{TDMA}

\end{axis}
\end{tikzpicture}%
	\caption{Comparison of maximum sum-rates in the three-user Gaussian MAC  for $T=1$ \si{\second}, $B^{(1)},B^{(2)},B^{(3)}=1.25$ \si{\joule}, $\gamma^{(1)},\gamma^{(2)},\gamma^{(3)}=0.5$ \si{\watt} and $r_1,r_2,r_3=r$ \si{\ohm}.  }
	\label{fig:sum-Num}
\end{figure}

\section{Numerical Results}\label{sec:numerical results}
We now obtain numerical results. Based on \cite{IR,Krieger}, we assume $g^{(u)}(d)=-0.44r^{(u)}d^2+d$, where $r^{(u)}$ \si{\ohm} is the internal resistance of the battery in user $u$. 

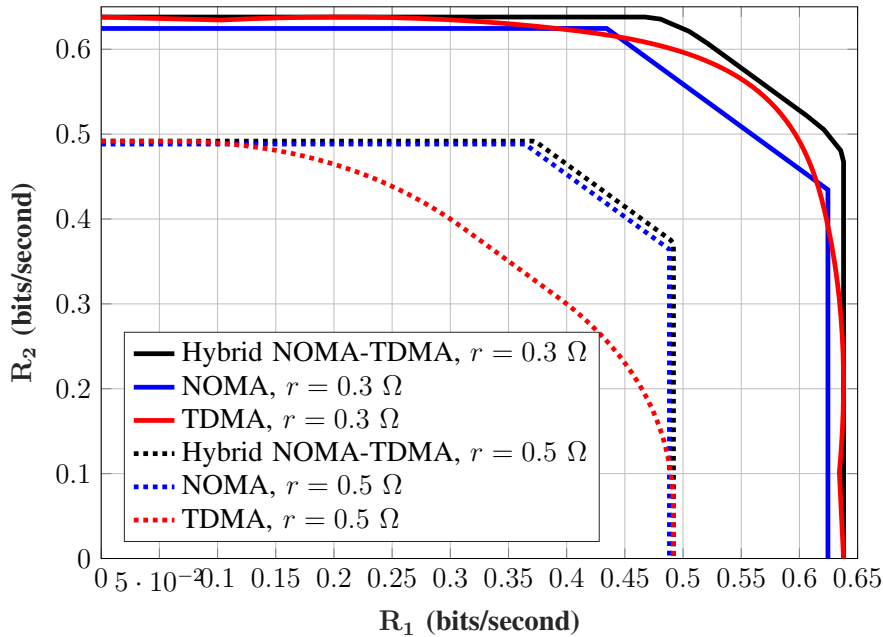
\begin{figure}[t]
	\centering
	% This file was created by matlab2tikz.
%
%The latest updates can be retrieved from
%  http://www.mathworks.com/matlabcentral/fileexchange/22022-matlab2tikz-matlab2tikz
%where you can also make suggestions and rate matlab2tikz.
%
\begin{tikzpicture}[scale=.9]

\begin{axis}[%
width=4.399in,
height=3.211in,
at={(0.88in,0.835in)},
scale only axis,
xmin=0,
xmax=0.65,
xlabel style={font=\bfseries\color{white!15!black}},
xlabel={$\mathbf{R_1}$ (bits/second)},
ymin=0,
ymax=0.65,
ylabel style={font=\bfseries\color{white!15!black}},
ylabel={$\mathbf{R_2}$ (bits/second)},
axis background/.style={fill=white},
xmajorgrids,
ymajorgrids,
legend style={at={(0.03,0.03)}, anchor=south west, legend cell align=left, align=left, draw=white!15!black}
]
\addplot [color=black, line width=2.0pt]
  table[row sep=crcr]{%
0	0.637947011203481\\
0.467450863173795	0.637947011205043\\
0.480688609145559	0.635474099853918\\
0.50524947465977	0.621047157882205\\
0.52130278657785	0.606620198907522\\
0.606643735433369	0.521279241653818\\
0.621047157882205	0.50524947465977\\
0.635474099853918	0.480688609145559\\
0.637947011203481	0.467422698493232\\
0.637947011203481	0\\
};
\addlegendentry{Hybrid NOMA-TDMA, $r=0.3$ \si{\ohm}}

\addplot [color=blue, line width=2.0pt]
  table[row sep=crcr]{%
0	0.624490864907794\\
0.434402824145775	0.624490864907794\\
0.624490864907794	0.434402824145775\\
0.624490864907794	0\\
};
\addlegendentry{NOMA, $r=0.3$ \si{\ohm}}

\addplot [color=red, line width=2.0pt]
  table[row sep=crcr]{%
0	0.637947027544129\\
0.103328398113129	0.634437196840895\\
0.128162155630656	0.635905018973862\\
0.152496742142806	0.636970515562048\\
0.176456950400923	0.637645747377566\\
0.199917987653662	0.637934146848028\\
0.223004646652369	0.637844499274603\\
0.245592134645698	0.63738491144854\\
0.267805244384994	0.636560785336641\\
0.289643975870256	0.635377240322828\\
0.311108329101486	0.633839643290894\\
0.332198304078682	0.631953615269364\\
0.352913900801846	0.629725038136775\\
0.373255119270977	0.627160061391359\\
0.393221959486074	0.624265108991785\\
0.412814421447138	0.621046886278851\\
0.428320259797345	0.618159772584467\\
0.441505898898949	0.61531891843672\\
0.453830772920218	0.612283590002324\\
0.465274499175819	0.609082463037281\\
0.475920057739801	0.605720612525882\\
0.485928199483888	0.602169671929401\\
0.495262557276655	0.598464504718265\\
0.504059291455079	0.594572803679238\\
0.512355673885245	0.590493773852207\\
0.520185042770855	0.58622655223248\\
0.527577399940287	0.581770199937064\\
0.534559893375013	0.577123693464216\\
0.541221001207672	0.572236810301568\\
0.547513665176554	0.567152537125077\\
0.553517060621417	0.561815223843413\\
0.55924260086194	0.556215324246396\\
0.564700879659197	0.550342595593249\\
0.569901758007523	0.544186017580736\\
0.574854439116604	0.537733698299985\\
0.579567533498914	0.53097276452657\\
0.584092356174141	0.523817436291852\\
0.588429889810985	0.516242086651201\\
0.592581032831277	0.508218476306478\\
0.596509886866184	0.499798284645263\\
0.60029272685871	0.490784973330461\\
0.603858773528446	0.481310187829539\\
0.60727592045106	0.471151389590625\\
0.610510099215312	0.460345621248078\\
0.613561861416636	0.448835363265823\\
0.616456635765027	0.436440059111238\\
0.619189355870688	0.423060511960444\\
0.621924216144695	0.407697918642019\\
0.625043357206294	0.388105456680955\\
0.62784116586839	0.368138616465858\\
0.630310901872387	0.347797397996727\\
0.632434363975659	0.327206594024908\\
0.634221544895384	0.306241411799056\\
0.635666400863401	0.284901851319171\\
0.636768393466476	0.263063119833908\\
0.637512580331997	0.240850010094612\\
0.637893813425448	0.218262522101283\\
0.637907186132837	0.195300655853921\\
0.637545120061874	0.171839618601182\\
0.636802200245904	0.148004203094409\\
0.635667556561492	0.123669616582259\\
0.634319818499041	0.101581299594308\\
0.637947027544129	0\\
};
\addlegendentry{TDMA, $r=0.3$ \si{\ohm}}

%%%%%%%%%%%%%%%%%%%%%%%%%%%%%%%%%%%%%%%%%%%%%%%%%%%%%%%%%%%%%%%%%%%%%%%%%%%%%%%

\addplot [color=black, line width=2.0pt,dotted]
table[row sep=crcr]{%
	0	0.491760690819527\\
	0.370822697142469	0.491760690847686\\
	0.37502987439648	0.489490775389741\\
	0.489496102302511	0.375024536985232\\
	0.491760690819527	0.3708154944072\\
	0.491760690819527	0\\
};
\addlegendentry{Hybrid NOMA-TDMA, $r=0.5$ \si{\ohm}}

\addplot [color=blue, line width=2.0pt,dotted]
table[row sep=crcr]{%
	0	0.488286481309482\\
	0.36415633027666	0.488286481309482\\
	0.488286481309482	0.36415633027666\\
	0.488286481309482	0\\
};
\addlegendentry{NOMA, $r=0.5$ \si{\ohm}}

\addplot [color=red, line width=2.0pt,dotted]
table[row sep=crcr]{%
	0	0.491760697319601\\
	0.0697637959268528	0.491575020620986\\
	0.0823016668815783	0.491020013643162\\
	0.0946294059208616	0.490099618175641\\
	0.106747013044702	0.488817942965195\\
	0.118584444281287	0.48719003622996\\
	0.130211743602429	0.48521419134732\\
	0.141628911008129	0.48289520718989\\
	0.152835946498386	0.480238073016389\\
	0.163902894045016	0.477227677505221\\
	0.174759709676203	0.473884627267798\\
	0.185406393391947	0.470214323397595\\
	0.195912989164064	0.466194220949255\\
	0.206209453020738	0.461852768973756\\
	0.216365828933783	0.457161998090186\\
	0.226382116903201	0.452119105003843\\
	0.236188272957176	0.446760764895088\\
	0.245854341067523	0.441049970695419\\
	0.255380321234242	0.434983108251458\\
	0.264766213457333	0.428556097182288\\
	0.274012017736795	0.421764328218044\\
	0.283047690100815	0.414659543020675\\
	0.291943274521207	0.407186628759831\\
	0.300698770997971	0.399339453751973\\
	0.309314179531106	0.391111066658463\\
	0.328996535610871	0.371443182530221\\
	0.397380103159912	0.302800090152394\\
	0.405348946890465	0.294044593675631\\
	0.412938145453114	0.285149009255239\\
	0.420154262104217	0.276113336891219\\
	0.427003091662278	0.26693757658357\\
	0.433489753237089	0.257621728332294\\
	0.439662520721752	0.248095748165575\\
	0.445475256187841	0.238429680055228\\
	0.450931845011325	0.228623524001252\\
	0.456035734253498	0.218677280003649\\
	0.460789984338932	0.208590948062417\\
	0.465226112721994	0.198294484205743\\
	0.469312340006045	0.187857932405441\\
	0.473074525569787	0.177211248689696\\
	0.476486278018396	0.166424477030324\\
	0.479568161817826	0.155427573455508\\
	0.482298905602575	0.144290581937065\\
	0.484694103726505	0.132943458503179\\
	0.486748672189689	0.121386203153851\\
	0.488457720717552	0.109618815889081\\
	0.489816547730663	0.0976412967088683\\
	0.490820635664679	0.0854536456132132\\
	0.491465646575639	0.0730558626021159\\
	0.491747956769903	0.0603779037037622\\
	0.491760697319601	0.0570157930566849\\
	0.491760697319601	0\\
};
\addlegendentry{TDMA, $r=0.5$ \si{\ohm}}

\end{axis}
\end{tikzpicture}%
	\caption{Comparison of the maximum achievable rate regions in the two-user Gaussian MAC  for $T=1$ \si{\second}, $B^{(1)},B^{(2)}=1.25$ \si{\joule}, $\gamma^{(1)},\gamma^{(2)}=0.5$ \si{\watt} and $r_1,r_2=r$ \si{\ohm}. }
	\label{fig:MAC-Num}
\end{figure}
\subsection{Circuit cost, Internal Resistance and Sum-rate}
In Fig. \ref{fig:sum-Num}, we study the impact of the circuit cost and internal resistance on the maximum sum-rate. From Fig. \ref{fig:sum-Num}, we first note, unlike in the ideal battery case  where TDMA always achieves the maximum sum-rate\cite{GP},  when the internal resistance is non-zero, the performance of TDMA drops significantly with the internal resistance. Second, when the internal resistance is low, the performance of TDMA  is very close to the performance of NOMA-TDMA. On the other hand, when the internal resistance is high, the performance of  NOMA is close to the performance of the hybrid NOMA-TDMA. Note that the hybrid NOMA-TDMA takes the features from both the strategies and hence, its performance is the best.

\subsection{Achievable Rate Regions}
We plot the maximum achievable rate-regions for $r=0.3$ \si{\ohm} and $r=0.5$ \si{\ohm} in Fig. \ref{fig:MAC-Num}. We see that the maximum sum-rate in the TDMA strategy is higher than the NOMA strategy when $r=0.3$ \si{\ohm}. However, for $r=0.5$ \si{\ohm}, the maximum sum-rate in the TDMA strategy is significantly lower than the  NOMA strategy. Further, the rate region achieved by the hybrid NOMA-TDMA strategy is larger than other two strategies for both the values of the internal  resistance.   
%We also note that the maximum achievable rate region is not a pentagon.   

\section{Conclusions}\label{sec:reflections}
In this work, we studied the impact of the circuit cost and internal resistance of the battery on sum-rates and achievable rate regions of a multiple access channel.  When the internal resistance of the battery is non-zero, we have shown that TDMA may not achieve the maximum sum-rate.  Further, when the circuit cost is non-zero, NOMA does not achieve the maximum sum-rate. Finally, we have shown that a hybrid NOMA-TDMA  %which combines the features of both the strategies 
achieves the maximum sum-rate in general. We have also shown that   
the maximum achievable rate regions in NOMA and TDMA  are contained within the maximum achievable rate region of the hybrid NOMA-TDMA, numerically.

\section*{Appendix}

\subsection{Proof of Lemma \ref{lemma:p2p0}}
We first note that it is optimal to utilize all the energy stored in the battery,  subject to the maximum discharge rate constraint, $d\leq D_0$. 
Hence, both \eqref{eq:p2p-1-c1} and \eqref{eq:p2p-1-c2} must be satisfied with equality, i.e., $d=\min(B/\tau,D_0)$ and $P=\left[g\left(\min({B/\tau,D_0})\right)-\gamma\right]^+$, where $[x]^+=\max(x,0)$. 
%Note that when $g(D_0)<\gamma $, the solution is infeasible as the battery cannot deliver sufficient power even to run the circuitry. 
Now,  $\mathrm{(P1)}$ in \eqref{eq:p2p-1} can be reformulated as:
\begin{align}\label{eq:p2p-2}
&\mathrm{(P1')}:\underset{\substack{0\leq \tau \leq T}}{\text{maximize}} \;\tau\log\left(1+g\left(\min\left(\frac{B}{\tau},D_0\right)\right)-\gamma\right)&&
\end{align}
Note that the objective function of $\mathrm{(P1')}$ in \eqref{eq:p2p-2} is a monotonically increasing continuous function of $\tau$ over $\tau \in [0,T]$ -- it is a linearly increasing function over $\tau\in [0,B/D_0]$, as $\min(B/\tau,D_0)=D_0$, and a strictly concave increasing  function over $\tau\in (B/D_0,T)$.  Clearly, for  $\tau\in [0,B/D_0]$, the objective function attains its maximum at $\tau=B/D_0$. 
Hence, we can obtain the optimal solution by solving the following convex optimization problem in \eqref{eq:p2p-3}.  Hence, the proof. 

\subsection{Proof of Proposition \ref{lemma:p2p}}
The Lagrangian of $\mathrm{(P2)}$ in \eqref{eq:p2p-3} is given by, 
\begin{align}
&L=\tau\log\left(1+g\left( \frac{B}{\tau}\right)-\gamma\right)+\lambda(\frac{B}{D_0}-\tau)+\mu(\tau-T)&&
\end{align}
where $\lambda,\mu\geq 0$ are non-negative Lagrange multipliers. Differentiating $L$ with respect to $\tau$ and equating to zero, we get the following stationarity condition that gets satisfied for the optimal $\tau^*$, 
\begin{align}
&\frac{-\frac{B}{\tau}g'(\frac{B}{\tau})}{1+g\left( \frac{B}{\tau}\right)-\gamma}+
\log\left(1+g\left( \frac{B}{\tau}\right)-\gamma\right)-\lambda+\mu=0&&
\end{align}
Further, due to complementary slackness conditions, we must have, $\lambda({B}/{D_0}-\tau)=0,\;\mu(\tau-T)=0$ in the optimal solution. Hence, when $B/D_0<\tau<T$, from the above KKT conditions,  the optimal $\tau$ must satisfy the following equation. 
\begin{align}\label{eq:opt_tau}
&\left(1+g\left( \frac{B}{\tau}\right)-\gamma\right)\log\left(1+g\left( \frac{B}{\tau}\right)-\gamma\right)=\frac{B}{\tau}g'\left(\frac{B}{\tau}\right)&&
\end{align}
Suppose the optimal transmission duration, $\tau^*$ when $B=B'$ is $\tau'$. Noting that \eqref{eq:opt_tau} depends only on the ratio, $B/\tau$, clearly, for a given $B$, any $\tau$ that satisfies   $B/\tau=B'/\tau'$ is a solution to \eqref{eq:opt_tau}.
Since the objective function is strictly concave over  $\tau\in (B/D_0,T)$, we conclude that $\tau^*=B/(B'/\tau')$ is the optimal solution.   
Hence, clearly, $\tau^*$ linearly increases with $B$. Hence, the proof. 
\subsection{Proof of Lemma \ref{thm:sum-rate}}
We construct examples to prove each of the above statements. Assume $B^{(1)}=B^{(2)}$, $g^{(1)}(\cdot)=g^{(2)}(\cdot)$ and $\gamma^{(1)}=\gamma^{(2)}=0$. For simplicity, assume that $2B^{(1)}/T\leq D^{(1)}_0=D^{(2)}_0$. Since $\gamma^{(u)}=0$, 
there is no loss in the circuit operation and we can transmit for the entire frame duration, i.e., $\tau_2+\tau_3+\tau_4=T$ and $\tau_1=0$ in the optimal solution. 
\begin{enumerate}
	\item In the above setting, from \eqref{eq:cap_regionS12}, we get, $R_{\mathrm{NOMA}}=T\log\left(1+2g^{(1)}\left({B^{(1)}}/{T}\right)\right)$. As the parameters of both the users are identical,  in the TDMA strategy in \eqref{eq:mac-II},  the maximum value of $R_{\mathrm{TDMA}}$ is achieved at $\tau_2=\tau_3=0.5$. Hence, $R_{\mathrm{TDMA}}=T\log\left(1+g^{(1)}\left({2B^{(1)}}/{T}\right)\right)$. Due to the strict concavity of $g^{(1)}(\cdot)$ function, we have, $g^{(1)}\left({2B^{(1)}}/{T}\right)< 2g^{(1)}\left({B^{(1)}}/{T}\right)$. Hence, $R_{\mathrm{TDMA}}< R_{\mathrm{NOMA}}$.
	\item Since the{ hybrid NOMA-TDMA} strategy is a generalization of the strategy in \eqref{eq:cap_regionS12}, we have, $R_{\mathrm{NOMA-TDMA}}\geq R_{\mathrm{NOMA}}$. Now, from the first result, we have,  $R_{\mathrm{TDMA}}< R_{\mathrm{NOMA}} \leq R_{\mathrm{NOMA-TDMA}}$. This shows that the {hybrid NOMA-TDMA} strategy achieves a strictly higher rate than the TDMA strategy. 
\end{enumerate}
Hence, the proof. 

\subsection{Proof of Theorem \ref{thm:sum-rateUusers}}
This theorem is proved along the lines in the proof of Lemma \ref{thm:sum-rate},  by constructing appropriate examples as described below. 
Assume $B^{(u)}=B$, $g^{(u)}(\cdot)=g(\cdot)$ and $\gamma^{(u)}=0$ for all $u\in \mathcal{U}$. In this example, from \eqref{eq:cap_regionS12}, we get, $R_{\mathrm{NOMA}}=T\log\left(1+Ug\left({B}/{T}\right)\right)$. As the parameters are identical for both the users,  in the TDMA strategy, the maximum value of $R_{\mathrm{TDMA}}$ is achieved when all the users transmit for the equal duration of $T/U$ seconds. Hence, $R_{\mathrm{TDMA}}=T\log\left(1+g\left({UB}/{T}\right)\right)$. Due to the strict concavity of $g(\cdot)$ function, we have, $g\left({UB}/{T}\right)< Ug\left({B}/{T}\right)$. Hence, $R_{\mathrm{TDMA}}< R_{\mathrm{NOMA}}$. From the second part of Lemma \ref{thm:sum-rate}, the second statement is trivial.  Hence, the proof. 

\subsection{Proof of Proposition \ref{thm:uC2}}
To prove the proposition, we suppose the users incrementally use the available energy in their batteries in steps of $\Delta B$, where  $\Delta B$ can be arbitrarily small. 
Now, even when there is only one user, due to the non-zero circuit cost, it is not optimal to transmit over the entire frame duration. Let $\tau^{(u)*}$ be the single user optimal transmit duration in user $u$,  obtained by solving $\mathrm{(P2)}$ in \eqref{eq:p2p-3} with energy of $\Delta B$ units. From Proposition \ref{lemma:p2p}, $\tau^{(u)*}$ is a linearly increasing function of $\Delta B$ and, clearly, $\tau^{(u)*}=0$ when $\Delta B=0$. Hence, when $\Delta B$ is sufficiently small, we will have, $\sum_{u=1}^{U}\tau^{(u)*}<T$. Clearly, in this case, each user must transmit exactly the same way as the single user case, i.e., the users need not overlap their transmissions for optimality. Hence, $\tau_i=0,\; \forall i\notin \{\mathcal{I}_0,\mathcal{I}_1\}$. 
Now, consider the situation when the users allocate additional $\Delta B$ units of energy. In this case, based on Proposition  \ref{lemma:p2p}, the optimal $\tau^{(u)*}$'s increase. As we continue to increment the total amount of energy allocated by the users for the transmission, at some stage, we will have, $\sum_{u=1}^{U}\tau^{(u)*}=T$. This implies, $\tau_i=0,\; \forall i\notin \{\mathcal{I}_1\}$.  
Now, in order to allocate an additional amount of energy, 
%, the existing transmit powers can be viewed as an interference.  
%Based on the existing power levels of all the users, the circuit cost and the internal resistance, 
the users may either allocate more power to those time windows which have been  already occupied by themselves. In this case, since the discharge function is concave and as there is an upper limit on the discharge power of the battery, after a certain level of the transmit power, it may no longer be optimal to increase the transmit power. In this case, the user can increase the duration over which it transmits, either at the cost of reducing the time duration allocated to other users, in which case the transmit powers of the other users increase, or by superimposing on the time windows where other users have already allocated the power. 
When any two users superimpose, the total transmit power in the time window over which they superimpose is higher than the transmit power over the remaining fraction of the frame duration.  Hence, it is suboptimal for any of the remaining users to transmit over the same window over which the two users are transmitting. As a result, the remaining users occupy the time windows where only a single user is transmitting, i.e., we will have,  $\tau_i=0,\; \forall i\notin \{\mathcal{I}_1,\mathcal{I}_2\}$. Now, as the users continue to allocate more amount of energy, at some stage, there will be a superposition of a set of two users at any given instant of time. That is, we will have, $\tau_i=0,\; \forall i\notin \{\mathcal{I}_2\}$. 
Proceeding along the above lines, one can find that as the users increment the total amount of energy, we will have, $\tau_i=0,\; \forall i\notin \{\mathcal{I}_2,\mathcal{I}_3\}$ and so on, until all the users superimpose their transmissions for the entire $T$ seconds. Hence, the proof. 
\bibliographystyle{ieeetran}
\bibliography{IEEEabrv,isit}

\end{document}